\newtheorem{thm}{Theorem}
\newtheorem{lem}{Lemma}
\newtheorem{de}{Definition}
\newtheorem{rem}{Remarks}
\newtheorem{prop}{Proposition}
\newtheorem{cor}{Corollary}
\newtheorem{exam}{Example}
\begin{document}

\title{Reaction Automata}

\author{{\bf Fumiya Okubo$^1$}, {\bf Satoshi Kobayashi$^2$}  and {\bf Takashi Yokomori$^3$\footnote{Corresponding author}}\\[2mm]
$^1$Graduate School of Education\\
Waseda University, 1-6-1 Nishiwaseda, Shinjuku-ku\\
Tokyo 169-8050, Japan\\
{\tt f.okubo@akane.waseda.jp}\\[2mm]
$^2$Graduate School of Informatics and Engineering,\\
 University of Electro-Communications, \\
 1-5-1 Chofugaoka, Chofu-shi, Tokyo 182-8585, Japan\\
{\tt satoshi@cs.uec.ac.jp}\\[2mm]
$^3$Department of Mathematics\\
 Faculty of Education and Integrated Arts and Sciences\\
 Waseda University, 1-6-1 Nishiwaseda, Shinjuku-ku\\
Tokyo 169-8050, Japan\\
{\tt yokomori@waseda.jp}}

\date{}
\maketitle

\begin{abstract}
Reaction systems are a formal model that has been introduced to investigate  the interactive behaviors of biochemical reactions. Based on the formal framework of reaction systems, we propose new computing models  
called {\it reaction automata} that feature (string) language acceptors with multiset manipulation as a computing mechanism, and show that 
reaction automata are computationally Turing universal. Further, some subclasses of reaction automata with space complexity are investigated and their language classes are compared to the ones in the Chomsky hierarchy.  
\end{abstract}

\section{Introduction}

In recent years, a series of seminal papers \cite{ER:07a,ER:07b,ER:09} has been published in which  Ehrenfeucht and Rozenberg have introduced a formal model, called {\it reaction systems}, for investigating interactions between biochemical reactions, where  two basic components (reactants and inhibitors) are employed as regulation mechanisms for controlling biochemical functionalities.  It has been shown that reaction systems provide a formal framework best suited for investigating in an abstract level the way of emergence and evolution of biochemical functioning such as events and modules. In the same framework, they also introduced the notion of time into reaction systems and investigated notions such as reaction times, creation times of compounds and so forth.  Rather recent two papers  \cite{EMR:10,EMR:11} continue the investigation of reaction systems, with the focuses on combinatorial properties of functions defined by random reaction systems and on the dependency relation between the power of defining functions and  the amount of available resource.  

In the theory of reaction systems, a (biochemical) reaction is formulated as a triple $a=(R_a, I_a, P_a)$, where $R_a$ is the set of molecules called {\it reactants}, $I_a$ is the set of molecules called {\it inhibitors}, and $P_a$ is the set of molecules called {\it products}. Let $T$ be a set of molecules, and the result of applying a 
 reaction $a$ to $T$, denoted by $res_a(T)$, is given by $P_a$ 
if $a$ is enabled by $T$ (i.e., if $T$ completely includes $R_a$ 
and excludes $I_a$). Otherwise, the result is empty.  Thus, $res_a(T)=P_a$ if $a$ is enabled on $T$, and $res_a(T)=\emptyset$ otherwise. The result of applying a reaction $a$ is extended to the set of reactions $A$, denoted by $res_A(T)$, and an interactive process 
consisting of a sequence of $res_A(T)$'s is properly introduced and investigated. 

In the last few decades, the notion of a multiset has frequently 
appeared and been investigated in many different areas such as 
mathematics, computer science, linguistics, and so forth. 
(See, e.g., \cite{CPRS:01} for the reference papers written from the 
viewpoint of mathematics and computer science.) The notion of a multiset has received more and more attention, particularly in the areas of biochemical computing and molecular computing (e.g., \cite{Paun:00,Set:01}).  

Motivated by these two notions of a reaction system and a multiset, in this paper we will introduce  computing devices called {\it reaction automata} and  show  that they are computationally universal by proving that any recursively enumerable language is accepted by a reaction automaton.  
There are two points to be remarked:  On one hand, the notion of reaction automata may be taken as  a kind of an extension of reaction systems in the sense that our reaction automata deal with {\it multisets} rather than (usual) sets  as reaction systems do, in the sequence of computational process.  On the other hand, however,       
reaction automata are introduced as computing devices that accept 
the sets of {\it string objects} (i.e., languages over an alphabet). 
This unique feature, i.e., a string accepting device based on multiset 
computing in the biochemical reaction model can be realized by introducing a simple idea of feeding an input to the device from the environment and by employing a special encoding technique.  

\begin{figure}[t]
\centerline{
\includegraphics[scale=0.2]{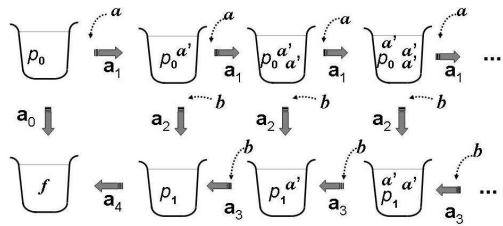}}
\caption{A graphic illustration of interactive biochemical reaction processes for accepting  strings in the language $L=\{a^nb^n \mid n\geq 0\}$ in terms of our reaction automaton  $\mathcal{A}$. }
\label{graphic}
\end{figure}

In order to illustrate an intuitive idea of the notion of reaction automata and their behavior, we give in Figure \ref{graphic} a simple example of the behavior of a reaction automata $\mathcal {A}$ that consists of the set of objects $\{p_0, p_1, a, b, a', f \}$ (with the input alphabet $\{a, b\}$), 
 the set of reactions 
$\{{\bf a}_0=(p_0, \{a,b,a'\}, f), {\bf a}_1=(p_0a,\{b\}, p_0a'), {\bf a}_2=(p_0a'b,\emptyset, p_1), {\bf a}_3=(p_1a'b,\{a\}, p_1), 
{\bf a}_4=(p_1,\{a,b,a'\}, f)\}$, 
where $\{p_0\}$ is the initial multiset and $\{f\}$ is the final multiset. 
Note that in a reaction ${\bf a}=(R_a, I_a, P_a)$, multisets $R_a$ and $P_a$ are  represented by string forms, while $I_a$ is given as a set.  
In the graphic drawing of Figure \ref{graphic},  each reaction ${\bf a}_i$ 
is applied to a multiset (of a test tube) after receiving an input symbol (if any is provided) from the environment. In particular,  applying ${\bf a}_0$ to $\{p_0\}$ leads to that the empty string is  accepted by $\mathcal{A}$. It is seen, for example, that reactions ${\bf a}_1$ and ${\bf a}_2$ are enabled by the multiset $T=\{p_0,a',a'\}$ only when inputs 
$a$ and $b$, respectively, are received,  which result in producing $R_1=\{p_0,a',a',a'\}$ 
and $R_2=\{p_1,a'\}$, respectively. Thus, we have that $res_{{\bf a}_1}(T\cup \{a\})=R_1$ and $res_{{\bf a}_2}(T\cup \{b\})=R_2$.  
Once applying ${\bf a}_2$ has brought about a change of $p_0$ into $p_1$, $\mathcal{A}$ has no possibility of accepting further inputs $a$'s, because of the inhibitors in ${\bf a}_3$ or ${\bf a}_4$.  One may easily see that $\mathcal{A}$ accepts the language $L=\{a^nb^n\mid n\geq 0\}$. We remark that     reaction automata allow a multiset of reactions $\alpha$ to apply to a multiset of objects $T$ in an exhaustive manner (what we call {\it maximally parallel manner}),  and therefore the interactive process sequence of computation is nondeterministic in that the reaction result from  $T$ may produce more than one product. The details for these are formally described in the sequel.

This paper is organized as follows.  After preparing the basic notions and notations from formal language theory in Section 2, we formally introduce  the main notion of reaction automata together with one language example in Section 3. Then, Section 4 describes a multistack machine (in fact, two-stack machine) whose specific property will be demonstrated to be very useful in the proof of the main result in the next section. Thus, in Section 5 we present our main results: reaction automata  are 
computationally universal. We also consider some subclasses of reaction automata from a viewpoint of the complexity theory in Section 6, and 
investigate the language classes accepted by those subclasses in comparison to the Chomsky hierarchy. 
Finally,  concluding remarks as well as future research topics are briefly discussed in Section 7.

\section{Preliminaries}
We assume that the reader is familiar with the basic notions of formal language theory.  For unexplained details, refer  to~\cite{HMU:03}. 

Let $V$ be a finite alphabet. For a set $U \subseteq V$, the cardinality of $U$ is denoted by $|U|$. The set of all finite-length strings over $V$ is denoted by $V^*$. The empty string is denoted by $\lambda$.  For a string $x$ in $V^*$,  $|x|$ denotes the length of $x$, while for a symbol $a$ in $V$ we denote  by  $|x|_a$ the number of occurences of $a$ in $x$.  
For $k \ge 0$, let $pref_k (x)$ be the prefix of a string $x$ of length $k$. 
For a string $w=a_1a_2\cdots a_n \in V^*$, $w^R$ is the reversal of $w$, that is, 
$(a_1 a_2 \cdots a_n)^R = a_n \cdots a_2 a_1$.
Further, for a string $x=a_1 a_2 \cdots a_n \in V^*$,  $\hat{x}$ denotes the hat version of $x$, i.e.,  
$\hat{x} = \hat{a_1} \hat{a_2} \cdots \hat{a_n}$, where each 
 $\hat{a}_i$ is in an alphabet $\hat{V} = \{ \hat{a} \, | \, a \in V \}$ such that $V \cap \hat{V} = \emptyset$. 

We use the basic notations and definitions regarding multisets that follow~\cite{CMM:01,KMP:01}.
A {\it multiset} over an alphabet $V$ is a mapping $\mu: V \rightarrow \mathbf{N}$, 
where $\mathbf{N}$ is the set of non-negative integers and for each $a \in V$, $\mu(a)$ represents the number of occurrences of $a$ in the multiset $\mu$.  
The set of all multisets over $V$ is denoted by $V^\#$, including the empty multiset denoted by $\mu_{\lambda}$, where $\mu_{\lambda}(a)=0$ for 
all $a\in V$.  A multiset $\mu$ may be represented as a vector, $\mu(V) = (\mu(a_1), \dots, \mu(a_n))$, for an ordered set $V = \{ a_1, \dots, a_n \}$.  We can also represent the multiset $\mu$ by any permutation of the string $w_\mu = a^{\mu(a_1)}_1 \cdots a^{\mu(a_n)}_n$. Conversely, with any string $x \in V^*$ one can associate the multiset $\mu_x : V \rightarrow \mathbf{N}$ defined by $\mu_x(a) = |x|_a$ for each $a \in V$. In this sense, we often identify a multiset $\mu$ with its string representation $w_{\mu}$ or any permutation of $w_{\mu}$.  Note that the string representation of $\mu_{\lambda}$ is $\lambda$, i.e., $w_{\mu_{\lambda}}=\lambda$.  

A usual set $U \subseteq V$ is regarded as a multiset $\mu_U$ such that 
 $\mu_U(a) = 1$ if  $a$ is in $U$ and $\mu_U(a)=0$ otherwise.  In particular, 
 for each symbol $a \in V$, a multiset $\mu_{\{a\}}$ is often denoted by $a$ itself.

For two multisets $\mu_1$, $\mu_2$ over $V$,  we define one relation and three operations as follows: 
\[
%\begin{center}
\begin{array}{ll}
%\begin{align*}
{\it Inclusion}: &\mu_1 \subseteq \mu_2 \text{ \  iff \ } \mu_1(a) \le \mu_2(a),  \text{ for each }  a \in V, \\
{\it Sum}:&(\mu_1 + \mu_2) (a) = \mu_1(a) + \mu_2(a),  \text{ for each }  a \in V,\\
{\it Intersection}:&(\mu_1 \cap \mu_2) (a) = {\rm min}\{\mu_1(a), \mu_2(a)\},  \text{ for each } a \in V,\\
{\it Difference}:&(\mu_1 - \mu_2) (a) = \mu_1(a) - \mu_2(a),  \text{ for each }  a \in V \text{ (for the case } \mu_2 \subseteq \mu_1 \text{ only)}.
%\end{align*}
\end{array}
\]
A multiset $\mu_1$ is called {\it multisubset} of $\mu_2$ if $\mu_1 \subseteq \mu_2$.  
The sum for a family of multisets $\mathcal{M} = \{\mu_i \}_{i \in I}$ is also denoted by $\sum_{i \in I}\mu_i$. For a multiset $\mu$ and $n \in \mathbf{N}$, $\mu^n$ is defined by $\mu^n (a) = n \cdot \mu(a)$ for each $a \in V$. The {\it weight} of a multiset $\mu$ is $|\mu| = \sum_{a \in V} \mu(a)$.

We introduce an injective function $stm : V^* \to V^\#$ that maps a string to a multiset in the following manner: 
\begin{eqnarray*}
\left\{ \begin{array}{ll}
 stm(a_1 a_2 \cdots a_n) = a_1 a_2^2 \cdots a_n^{2^{n-1}}  & \mbox{(for $n\geq 1$)} \\
stm(\lambda) = {\lambda}. &  \\
\end{array} \right.
\end{eqnarray*}

\section{Reaction Automata}

As is previously mentioned, a novel formal model called reaction systems has been introduced in order to investigate the property of interactions between biochemical reactions, where two basic components (reactants and inhibitors) are employed as regulation mechanisms 
for controlling biochemical functionalities (\cite{ER:07a,ER:07b,ER:09}). 
Reaction systems provide a formal framework best suited for investigating the way of emergence and evolution of biochemical functioning on an abstract level.  

By recalling  from \cite{ER:07a} basic notions related to reactions systems, we first 
extend them (defined on the sets) to the notions on the multisets.  Then, we shall  
introduce our notion of {\it reaction automata} which plays  a central role in this paper.

\begin{de}
{\rm 
For a set $S$,  a {\it reaction} in $S$ is a 3-tuple ${\bf a} = (R_{\bf a}, I_{\bf a}, P_{\bf a})$ of finite multisets, such that $R_{\bf a}, P_{\bf a} \in S^\#$, $I_{\bf a} \subseteq S$ and $R_{\bf a} \cap I_{\bf a} = \emptyset$.
}
\end{de}
The multisets $R_{\bf a}$ and $P_{\bf a}$ are called the {\it reactant} of ${\bf a}$ and the {\it product} of ${\bf a}$, respectively, while the set $I_{\bf a}$ is called the {\it inhibitor} of ${\bf a}$. These notations are extended to a multiset of reactions as follows:    For a set of reactions $A$ and a multiset $\alpha$ over $A$,  
\[ R_{\alpha} = \sum_{{\bf a}\in A} R_{\bf a}^{\alpha({\bf a})}, \, \, I_{\alpha} = \bigcup_{ {\bf a} \subseteq \alpha} I_{\bf a}, \, P_{\alpha} = \sum_{{\bf a}\in A} P_{\bf a}^{\alpha({\bf a})}. \] 

In what follows,  we usually identify the set of reactions $A$ with the set of labels $Lab(A)$ of reactions in $A$, and often use the symbol $A$ as a finite alphabet.

\begin{de}
{\rm 
Let $A$ be a set of reactions in $S$ and  $\alpha \in A^\#$ be a multiset of reactions 
over $A$.  Then, for  a finite multiset $T \in S^\#$, we say that \\
(1) $\alpha$ is {\it enabled by} $T$ if $R_\alpha \subseteq T$ and $I_\alpha \cap T = \emptyset$, \\
(2)  $\alpha$ is {\it enabled by $T$ in maximally parallel manner}   
if there is no  $\beta \in A^\#$ such that $\alpha \subset \beta$,  and  $\alpha$ and $\beta$  are enabled by  $T$.   \\
(3)  By $En^p_A(T)$ we denote the set of all multisets of reactions $\alpha \in A^\#$ which are enabled by $T$ in maximally parallel manner.\\
(4) The {\it results of $A$ on $T$}, denoted by $Res_{A}(T)$, is defined as follows: 
\[ 
Res_{A}(T) = \{ T - R_{\alpha} + P_{\alpha} \, | \, \alpha \in En^p_A(T)  \}. 
\] 
Note that we have $Res_{A}(T)= \{ T \}$ if $En^p_A(T)=\emptyset$. Thus, if no multiset of reactions $\alpha \in A^{\#}$ is enabled by $T$ in maximally parallel manner, then $T$ remains unchanged.  
}
\end{de}

%\noindent
\begin{rem}
{\rm 
 ($i$)\ It should be also noted that the definition of the results of $A$ on $T$ (given in (4) above) is in contrast to the original one in \cite{ER:07a},  because 
we  adopt the assumption of {\it permanency of elements}: 
any element that is not a reactant for any active reaction {\it does} remain in the result after the reaction.\\
($ii$)\ In general, $En^p_A(T)$ may contain more than one element, and 
therefore, so may $Res_A(T)$.\\
($iii$)\ For simplicity, $I_a$ is often represented as a string rather than a set.
}
\end{rem}

\begin{exam}{\rm 
Let $S=\{a, b, c, d, e\}$ and consider the following set $A=\{{\bf a}, {\bf b}, {\bf c}\}$ of reactions in $S$:    
\[
 {\bf a} = ( b^2, a, c ),\, {\bf b}= ( c^2, \emptyset, b ), \, {\bf c} = ( bc, d, e).  
\]
$(i)$\ Consider a finite multiset $T=b^4cd$. Then,  $\alpha_1={\bf a}$ is enabled by $T$, while neither {\bf b} nor {\bf c} is enabled by $T$, because $R_{{\bf b}} \not\subseteq T$ and $I_{{\bf c}} \cap T\not= \emptyset$.  Further, 
 $\alpha_2={\bf a}^2$ is not only enabled by $T$ but also 
enabled by $T$ in maximally parallel manner, because no $\beta$ with $\alpha_2\subset \beta$ is enabled by $T$.   Since $R_{{\bf a}^2}=b^4$, $P_{{\bf a}^2}=c^2$, and 
 $En^p_A(T)=\{{\bf a}^2\}$,  we have 
\[
Res_A(T)=\{ T-R_{{\bf a}^2}+P_{{\bf a}^2} \} =\{c^3d\}. 
\]
$(ii)$\ Consider $T'=b^3c^2e$. Then,  $\beta_1={\bf ab}$ and $\beta_2={\bf ac}$ 
are enabled by $T'$, while   {\bf bc} is not.  Further, 
 it is seen that both $\beta_1$ and $\beta_2$ are enabled by $T'$ in 
maximally parallel manner, and  $En^p_A(T')=\{{\bf ab}, {\bf ac}\}$. 
Thus, we have 
\[
Res_A(T')= \{b^2ce, c^2e^2\}. 
\]
If we take $T''=bcd$, then none of the reactions from $A$ is enabled by $T''$. Therefore, we have $Res_A(T'')=T''$.
}
\end{exam}

We are now in a position to introduce the notion of reaction automata.

\begin{de}{\rm 
{(Reaction Automata)}\ A {\it reaction automaton} (RA) $\mathcal{A}$ is a 5-tuple $\mathcal{A} = (S, \Sigma, A, D_0, f)$, where
\begin{itemize}
\item $S$ is a finite set,  called the {\it background set of}  $\mathcal{A}$,
\item $\Sigma (\subseteq S)$ is called the {\it input alphabet of}  $\mathcal{A}$, 
\item $A$ is a finite set of reactions in $S$,
\item $D_0 \in S^\#$ is an {\it initial multiset},
\item $f \in S$ is a special symbol which indicates the final state.
\end{itemize}
}
\end{de}

\begin{de}{\rm 
Let $\mathcal{A} = (S, \Sigma, A, D_0, f)$ be an RA and $w = a_1 \cdots a_n \in \Sigma^*$.  An {\it interactive process in $\mathcal{A}$ with input $w$} is an infinite sequence  
$\pi = D_0, \dots, D_i, \dots$, where 
\begin{eqnarray*}
\left\{ \begin{array}{ll}
 D_{i+1} \in Res_A(a_{i+1}+D_i)   & \mbox{(for $0\leq i \leq n-1$), and} \\
 D_{i+1} \in Res_A(D_i) & \mbox{(for all $i\geq n$)}.
\end{array} \right.
\end{eqnarray*}
By $IP(\mathcal{A}, w)$ we denote the set of all interactive processes in $\mathcal{A}$ with input $w$.}
\end{de}
In order to represent an interactive process $\pi$, we also use 
the ``arrow notation'' for $\pi : (a_1, D_0) \rightarrow \cdots \rightarrow (a_{n}, D_{n-1}) \rightarrow (D_n) \rightarrow (D_{n+1}) \rightarrow \cdots$, 
or alternatively, 
 $D_0 \rightarrow^{a_1}  D_1 \rightarrow^{a_2} D_2 \rightarrow^{a_3} 
\cdots \rightarrow^{a_{n-1}}  D_{n-1} \rightarrow^{a_{n}}  D_{n}  \rightarrow D_{n+1} \rightarrow \cdots$.

For an interactive process $\pi$ in $\mathcal{A}$ with input $w$, if $En^p_A(D_m) = \emptyset$ for some $m \ge |w|$, then we have that  $Res_A(D_m)= \{ D_m \}$ and $D_m =D_{m+1}=
\cdots$. In this case, considering the smallest $m$, we say that $\pi$ {\it converges on} $D_m$ 
(at the $m$-th step). When an interactive process $\pi$ converges 
on $D_m$, each $D_i$ of $\pi$ is omitted for $i \ge m+1$.

\begin{de}{\rm 
Let $\mathcal{A} = (S, \Sigma, A, D_0, f)$ be an RA.  The {\it language accepted by} $\mathcal{A}$, denoted by $L(\mathcal{A})$, is defined as follows:
\begin{align*}
L(\mathcal{A}) = \{ w \in \Sigma^* \, | \, & \,  \text{there exists }\pi \in IP(\mathcal{A}, w) \text{ that converges on} \\
& \text{$D_m$ at the $m$-th step for some $m\geq |w|$, }
\text{and } f \subseteq D_m \}. 
\end{align*}}
\end{de}

\begin{figure}[t]
\centerline{
\includegraphics[scale=0.55]{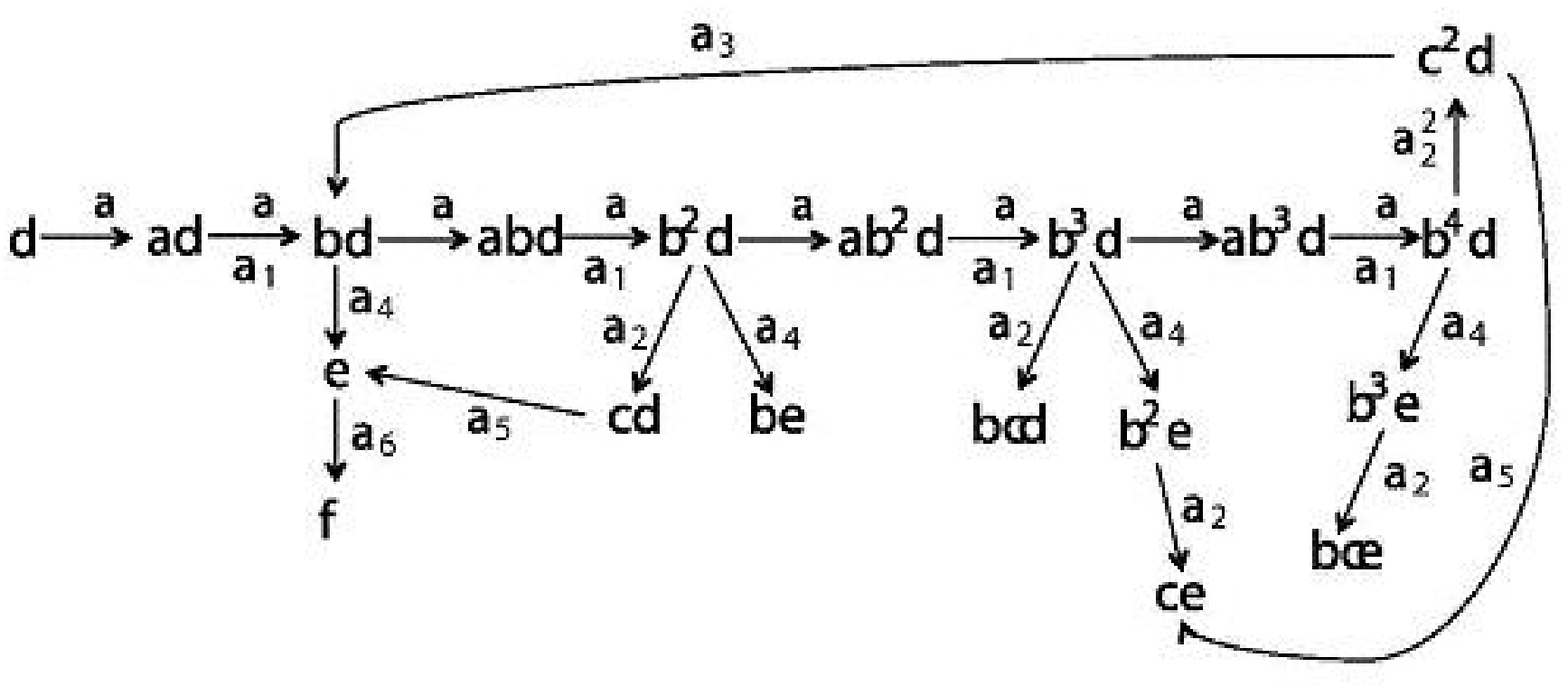}}
\caption{(a) Reaction diagram: Interactive processes for accepting $a^2$, $a^4$ and $a^8$ in $\mathcal{A}$. Some arrows are associated with a multiset of reactions applied at the step.}
\label{diag}
\end{figure}

\begin{exam}{\rm 
Let us consider a reaction automaton $\mathcal{A} = (S, \Sigma, A, D_0, f)$ defined as follows:
\begin{align*}
&S = \{ a, b, c, d, e, f \} \mbox{ with }  \Sigma=\{a\},  \\
&A = \{ {\bf a}_1, {\bf a}_2, {\bf a}_3, {\bf a}_4, {\bf a}_5, {\bf a}_6 \}, \ \mbox{where} \\
&\text{\ \ \ \ \ \ } {\bf a}_1 = ( a^2, \emptyset, b ),\ \,{\bf a}_2 = ( b^2, ac, c ),\, {\bf a}_3 = ( c^2, b, b ), \\
&\qquad  {\bf a}_4 = ( bd, ac, e ), \, {\bf a}_5 = ( cd, b, e ),\ \ {\bf a}_6 = ( e, abc, f ), \\
&D_0 = d.
\end{align*}
Let $w = aaaaaaaa \in S^*$ be the input string and consider an interactive process $\pi$ such that
\begin{align*}
\pi : d \rightarrow^a  ad \rightarrow^a  bd  \rightarrow^a 
 abd \rightarrow^a  b^2d  \rightarrow^a ab^2d \rightarrow^a b^3d  \rightarrow ^a ab^3d  \rightarrow^a b^4d \rightarrow c^2d \rightarrow bd \rightarrow e \rightarrow f. 
\end{align*}
It can be easily seen that $\pi \in IP(\mathcal{A}, w)$ and $w \in L(\mathcal{A})$. Figure \ref{diag} illustrates the whole view of possible interactive processes  in $\mathcal{A}$ with inputs 
$a^2, a^4$ and $a^8$.
For instance, since ${\bf a}^2_2  \in En^p_A(b^4 d)$, it holds that $c^2d \in Res_{A}(b^4 d)$. Hence, the step $b^4 d \rightarrow c^2d$ is valid. 
We can also see that $L(\mathcal{A}) = \{ a^{2^n} \, | \, n \ge 1 \}$ which  is context-sensitive.
}
\end{exam}

\section{Multistack Machines}
A multistack machine is a deterministic pushdown automaton with several stacks (\cite{HMU:03}).  It is known that a two-stack machine is equivalent to a Turing machine as a language accepting device.

A $k$-stack machine $M = (Q, \Sigma, \Gamma, \delta, p_0, Z_0, F)$ is defined as follows: $Q$ is a set of states, $\Sigma$ is an input alphabet, $\Gamma$ is a stack alphabet, $Z_0=(Z_{01},Z_{02},\ldots,Z_{0k})$ is the $k$-tuple of the initial stack symbols, $p_0 \in Q$ is the initial state, $F$ is a set of final states, $\delta$ is a transition function defined in the form: $\delta(p, a, X_1, X_2, \dots, X_k) = (q, \gamma_1, \gamma_2, \ldots, \gamma_k)$,
where $p,q \in Q$, $a \in \Sigma \cup \{ \lambda \}$, $X_i \in \Gamma$, $\gamma_i \in \Gamma^*$ for each $1 \le i \le k$. This rule means that in state $p$, with $X_i$ on the top of $i$-th stack, if the machine reads $a$ from the input, then go to state $q$, and replace the the top of each $i$-th stack with $\gamma_i$ for $1 \le i \le k$. We assume that each rule has a unique label and all labels of rules in $\delta$ is denoted by $Lab(\delta)$.
Note that the $k$-stack machine can make a $\lambda$-move, but there cannot be a choice of a $\lambda$-move or a non-$\lambda$-move due to the deterministic property of the machine. The $k$-stack machine accepts a string by entering a final state.

In this paper, we consider a modification on a multistack (in fact, two-stack) machine. Recall that in the simulation of a given Turing machine $TM$ with an input $w=a_1a_2\cdots a_{\ell}$ in terms of a multistack 
machine $M$, one can assume the following (see \cite{HMU:03}):
\begin{itemize}
\item[($i$)] At first, two-stack machine $M$ is devoted to making the copy of $w$ on stack-2. This is illustrated in (a) and (b)-1 of Figure \ref{sim}, for the case of $k=2$. {\it $M$ requires only non-$\lambda$-moves}.  
\item[($ii$)] Once the whole input $w$ is read-in by $M$, {\it no more access to the input tape of $M$ is necessary}. After having $w^R$ on stack-2, $M$ moves over $w^R$ (from stack-2) to produce $w$ on stack-1, 
as shown in (b)-2. These moves only require $\lambda$-moves and after this, each computation step of $M$ with respect to $w$ is performed by a $\lambda$-move, without any access to $w$ on the input tape.   
\item[($iii$)] Each stack has its own stack alphabet, each one being different from the others,  and a set of final states is a singleton. Once $M$ enters the final state, it immediately halts. Further, during a computation, each stack is not emptied. 
\end{itemize}
Hence, without changing the computation power, we may restrict all computations of a multistack machine that satisfies the conditions $(i), (ii), (iii)$.  We call this modified multistack machine a {\it restricted multistack machine}.  

In summary, a restricted $k$-stack machine $M_r$ is composed by $2k+5$ elements as follows:
\[ M_r = (Q, \Sigma, \Gamma_1, \Gamma_2, \dots, \Gamma_k, \delta, p_0, Z_{01}, Z_{02}, \dots, Z_{0k}, f), \]
where for each $1\leq i \leq k$, $Z_{0i} \in \Gamma_i$ is the initial  symbol for the $i$-th stack used only for the bottom, $f \in Q$ is a final state, and its computation proceeds only in the above mentioned way ($i$), ($ii$), ($iii$). Especially, $\lambda$-moves are used  after all non-$\lambda$-moves in a computation of $M_r$.

\begin{prop}
{\rm (Theorem 8.13 in \cite{HMU:03})} 
Every recursively enumerable language is accepted by a restricted two-stack machine.
\end{prop}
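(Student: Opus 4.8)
The statement is essentially Theorem~8.13 of \cite{HMU:03}: the plan is to recall the classical simulation of a Turing machine by a two-stack machine and then observe that that simulation can be reorganized so as to meet the normalizing conditions ($i$)--($iii$). Concretely, fix a recursively enumerable language $L$ and a (deterministic, one-tape) Turing machine $TM$ with $L(TM)=L$, and build a restricted two-stack machine $M_r$ with $L(M_r)=L$.

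The core is the standard encoding of a $TM$ configuration by a pair of stacks: throughout the simulation, stack-1 holds the tape contents from the scanned cell rightward (scanned symbol on top, trailing blanks suppressed) and stack-2 holds the tape contents strictly to the left of the head (nearest cell on top). One move of $TM$ --- read $X$, write $Y$, move the head, change state --- is realized by a single transition of the two-stack machine that rewrites both stack tops at once: a rightward move pops $X$ off stack-1 and pushes $Y$ onto stack-2, while a leftward move replaces the top of stack-1 by $ZY$, where $Z$ was the top of stack-2, and pops stack-2. Whenever a move would run past the used portion of a stack, the exposed bottom marker $Z_{0i}$ is read as a blank and a fresh blank symbol is pushed above it, rather than the marker itself being removed; this keeps both stacks nonempty at all times, giving the last clause of ($iii$). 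Making the stack alphabets $\Gamma_1,\Gamma_2$ disjoint (condition ($iii$)) costs nothing --- tag every symbol with the index of the stack on which it lives --- and the requirement of a single final state from which $M_r$ halts at once is met by adding one new state $f$ that $M_r$ enters (and stays in) as soon as $TM$ accepts.

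To obtain the phase structure, $M_r$ runs in three stages. Stage ($i$): while reading $w=a_1\cdots a_\ell$, push each $a_i$ onto stack-2; these are genuine non-$\lambda$-moves, one per input symbol, with no choice involved, so determinism is preserved, and at the end stack-2 holds $w$ reversed, i.e.\ reading it top-down gives $w^R$. Stage ($ii$): by $\lambda$-moves only, transfer the symbols off stack-2 onto stack-1, so that stack-1 comes to hold $w$ (read top-down); this is exactly the initial $TM$ configuration, head on $a_1$, nothing to the left. Stage ($iii$): simulate $TM$ step by step, each step one $\lambda$-move as described above, halting in $f$ upon acceptance. Since no input symbol is consumed after stage ($i$), all $\lambda$-moves indeed come after all non-$\lambda$-moves, as ($iii$) demands, and $M_r$ accepts precisely $L$.

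The one point needing care is the junction between stages ($i$) and ($ii$): a deterministic machine may not offer a choice between a $\lambda$-move and a non-$\lambda$-move, so the passage from ``still reading input'' to ``input exhausted, begin stage ($ii$)'' must be arranged in the standard way for such simulations (for instance by processing an end-of-input marker, which does not alter the accepted language). Everything else --- the bookkeeping of stack tops in stage ($iii$), the blank-extension of the tape so that no stack empties, and the alphabet tagging --- is routine, so the proof amounts to inspecting the construction underlying \cite{HMU:03} with these modifications in place.
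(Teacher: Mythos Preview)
Your proposal is correct and follows exactly the approach the paper has in mind: the paper itself does not prove this proposition but simply cites Theorem~8.13 of \cite{HMU:03} after sketching, in the paragraph preceding the proposition, precisely the three-phase organization (copy $w$ onto stack-2 by non-$\lambda$-moves, transfer $w^R$ to $w$ on stack-1 by $\lambda$-moves, then simulate the Turing machine by $\lambda$-moves only) together with the normalizations (disjoint stack alphabets, single halting final state, non-emptying stacks) that you spell out. Your write-up is in fact more detailed than what the paper provides, including the explicit handling of the stage~($i$)/($ii$) boundary via an end-marker, which the paper only alludes to in Figure~\ref{sim}.
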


\begin{figure}[t]
\centerline{
\includegraphics[scale=0.26]{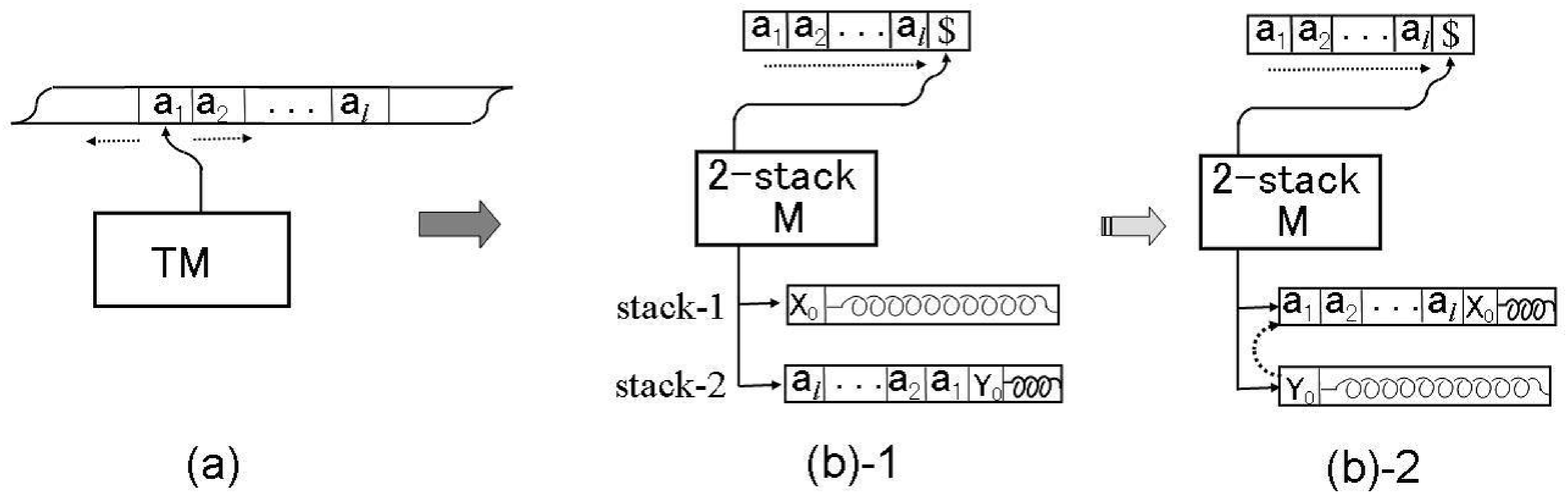}}
\caption{(a) Turing machine (TM);  (b)Two-stack machine $M$ simulating TM, where  $\$$ is the end marker for the input.}
\label{sim}
\end{figure}

\section{Main Results}

In this section we shall show the equivalence of the accepting powers between reaction machines and Turing machines. Taking Proposition 1 into 
consideration, it should be enough for the purpose of this paper to prove the following theorem.

\begin{thm}
If a language $L$ is accepted by a restricted two-stack machine, then $L$ is accepted by a reaction automaton. 
\label{teiri1}
\end{thm}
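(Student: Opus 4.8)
The plan is to simulate the restricted two-stack machine $M_r = (Q, \Sigma, \Gamma_1, \Gamma_2, \delta, p_0, Z_{01}, Z_{02}, f)$ accepting $L$ by a reaction automaton $\mathcal{A} = (S, \Sigma, A, D_0, f)$. I would encode a configuration of $M_r$ with state $p$ and stack contents $\gamma_1 \in \Gamma_1^*$, $\gamma_2 \in \Gamma_2^*$ (written bottom-to-top) as the multiset $p + stm(\gamma_1^R) + stm(\gamma_2^R)$, a multiset over $S$, where $S$ contains disjoint copies of $Q$, $\Gamma_1$, $\Gamma_2$, the alphabet $\Sigma$, and a bounded number of auxiliary ``control'' and ``parity'' symbols. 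The reason $stm$ is the right encoding is the invariant: in $stm(c_1 c_2 \cdots c_m) = c_1 c_2^2 \cdots c_m^{2^{m-1}}$ only $c_1$ contributes an odd number of copies, so the \emph{top} of a stack is recovered from its multiset encoding as the unique symbol occurring an odd number of times -- well defined because, by assumption $(iii)$, no stack of $M_r$ is ever emptied. Relative to this encoding a pop becomes ``delete one copy of the odd-count symbol, then halve every multiplicity over the relevant stack alphabet'', and a push of one symbol $c$ becomes ``double every such multiplicity, then add one copy of $c$''.

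These operations are realized with the maximally parallel semantics: a reaction rewriting $b$ into $\bar b^{\,2}$, applied in the maximally parallel manner, turns $b^k$ into $\bar b^{\,2k}$ in a single step (doubling), a reaction rewriting $b^2$ into $\bar b$ turns $b^{2k}$ into $\bar b^{\,k}$ (halving), and applying, for every stack symbol $x$, a reaction rewriting $x^2$ into $\hat x$ leaves exactly one copy of the odd-count symbol and none of the others, thereby isolating the stack top. Using alternating ``parity'' copies of the stack symbols, a whole doubling or halving of a stack takes one reaction step irrespective of the stack height. A transition $\delta(p, a, X_1, X_2) = (q, \gamma_1, \gamma_2)$ of $M_r$ is then simulated by a \emph{bounded} block of reaction steps of $\mathcal{A}$ -- isolate the two tops, verify they are $X_1$ and $X_2$ (this is the step that consumes the environment symbol $a$ when $a \neq \lambda$), perform the two pops, push $\gamma_1$ and $\gamma_2$ symbol by symbol, and emit $q$. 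To keep $\mathcal{A}$, which is \emph{a priori} nondeterministic because of maximal parallelism, faithful to the deterministic run of $M_r$, every micro-step carries a distinct ``program-counter'' control symbol and every phase is protected by an inhibitor, so that at each reachable multiset exactly one bundle of reactions is maximally enabled; consequently every process in $IP(\mathcal{A}, w)$ encodes step by step the unique computation of $M_r$ on $w$.

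For the input-reading phase of $M_r$ (assumption $(i)$), $\mathcal{A}$ is fed $w = a_1 \cdots a_n$ one symbol at a time; when $a_{i+1}$ arrives I let $\mathcal{A}$ perform a single step that doubles the current $\Gamma_2$-part and inserts one copy of $a_{i+1}$, so that by an easy induction this part comes to encode exactly the content of stack-2 after $M_r$ has copied the prefix read so far. Since the RA model inserts the next input symbol at a prescribed step, this ``read-and-double'' must occupy exactly one reaction step, which is why it is done by maximal parallelism in one shot; end of input is detected by a reaction inhibited by the whole of $\Sigma$, which therefore becomes enabled precisely once no input symbol remains (at step $n$), switching $\mathcal{A}$ into the phases that transfer $w$ onto stack-1 and then carry out the $\lambda$-moves of $M_r$ -- the latter needing no special treatment since after step $n$ the automaton evolves through $Res_A(D_i)$ without further input, matching assumption $(ii)$. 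Finally, the only reactions that touch a state symbol $p$ are those beginning the simulation of a transition out of $p$; when $p = f$ none is enabled, the process converges on a multiset $D_m$ which by construction contains $f$, whence $w \in L(\mathcal{A})$, and conversely any converging accepting process of $\mathcal{A}$ decodes to an accepting run of $M_r$, giving $L(\mathcal{A}) = L(M_r) = L$.

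The main difficulty lies not in any single idea but in the discipline of the construction: reading the top of a stack out of an order-less multiset (handled by the odd-multiplicity invariant of $stm$), pushing strings of length greater than one (iterated doubling), and above all taming the maximally parallel, nondeterministic semantics with guards and program-counter symbols so that $\mathcal{A}$ cannot shortcut or corrupt a simulated step -- this last point is where the bulk of the routine-but-delicate verification sits. (The multiplicities involved grow exponentially with the heights of the simulated stacks, which is harmless for the universality claim but pertinent to the space-bounded subclasses of Section 6.)
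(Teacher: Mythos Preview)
Your proposal is correct in outline and shares with the paper the two essential ideas: encoding each stack by the map $stm$, so that the top is the unique symbol of odd multiplicity, and using a hat/bar ``parity'' copy of the stack alphabet so that a full doubling or halving of a stack takes a single maximally-parallel step. Where you diverge from the paper is in how a transition $\delta(p,a,X_i,Y_j)=(q,x,y)$ is simulated. The paper keeps a \emph{rule label} $r\in Lab(\delta)$ in the current multiset as a one-step ``prophecy'' of the transition about to be applied; the main reaction consumes $p,X_i,Y_j,r$ (and $a$) and produces $\hat q\cdot stm(\hat x)\cdot stm(\hat y)\cdot r'$, while in the very same step the stack-shifting reactions $X_k^2\to \hat X_k^{2^{|x|}}$ and $Y_k^2\to \hat Y_k^{2^{|y|}}$ (whose inhibitors contain $Lab(\delta)\setminus\{r\}$, so they ``read'' $r$) multiply the rest of each stack by exactly the right power of two. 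Thus every $M$-move is one RA-step. The construction is intrinsically nondeterministic: a wrong choice of $r'$, or a reaction that grabs a non-top stack symbol, leaves a residual un-hatted symbol (because only the genuine top has odd multiplicity), and no further reaction is enabled; acceptance only occurs along the branch that guesses correctly throughout.

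Your route replaces the rule-label mechanism by a program counter and simulates each $\lambda$-move as a bounded block of micro-steps (isolate tops, pop, then push $x$ and $y$ one symbol at a time by iterated doubling). This costs more steps and more control symbols, but it buys you a process that is effectively deterministic at every reachable multiset---so your argument, once the guards are spelled out, would directly yield the stronger statement the paper proves separately as Theorem~2. The special-casing of the input phase (one RA-step per input symbol, just ``double stack-2 and add the incoming letter'') is legitimate precisely because in the restricted machine the non-$\lambda$ moves merely copy $w$ onto stack-2; for general pushes of length $>1$ you would indeed need several doublings, which is why you defer those to the $\lambda$-phase where the step count is unconstrained. The one place to be careful in your write-up is termination: you must arrange that when the state symbol $f$ appears no stack-manipulating reaction remains enabled (e.g.\ by having every such reaction require a program-counter symbol that is absent at block boundaries, or by putting $f$ in their inhibitor sets), so that the process actually converges on a multiset containing $f$.
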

\noindent
\textbf{[Construction of an RA]}\\[2mm]
Let $M = (Q, \Sigma, \Gamma_1, \Gamma_2, \delta, p_0, {X}_0, {Y}_0, f)$  be a restricted two-stack machine with  
 $\Gamma_1 = \{ X_0, X_1, \dots, X_n \}$, $\Gamma_2 = \{ Y_0, Y_1, \dots, Y_m \}$, $n, m \ge 1$, where $\Gamma = \Gamma_1 \cup \Gamma_2$, $X_0$ and  $Y_0$ are the initial stack symbols for stack-1 and stack-2, repsectively,  
and we may assume that $\Gamma_1 \cap \Gamma_2 =\emptyset$. 

We construct an RA $\mathcal{A}_M = (S, \Sigma, A, D_0, f')$ as follows:
\begin{align*}
&S = Q \cup \hat{Q} \cup \Sigma \cup \Gamma \cup \hat{\Gamma} \cup Lab(\delta) \cup \{ f' \}, \\
&A = A_0 \cup A_a \cup \hat{A_a}  \cup A_{\lambda} \cup \hat{A}_{\lambda} \cup A_{X} \cup \hat{A}_{X} \cup A_{Y} \cup \hat{A}_{Y} \cup  A_f \cup \hat{A}_f, \\
&D_0 = p_0 X_0 Y_0, 
\end{align*}
where the set of reactions $A$ consists of the following 5 categories : 
\begin{align*}
(1) &A_0 =\{ ( p_0 a X_0 Y_0, Lab(\delta),  \hat{q} \cdot stm(\hat{x}) \cdot stm(\hat{y}) \cdot r' ) \, | \, r: \delta(p_0, a, X_0, Y_0) = (q, x, y), \, r' \in Lab(\delta) \},  \\
(2) &A_a = \{ ( p a X_i Y_j r, \hat{\Gamma}, \hat{q} \cdot stm(\hat{x}) \cdot stm(\hat{y}) \cdot r' ) \, | \, a \in \Sigma, \, r: \delta(p, a, X_i, Y_j) = (q, x, y), \, r' \in Lab(\delta) \},  \\
&\hat{A}_a = \{ ( \hat{p} a \hat{X_i} \hat{Y_j} r, \Gamma, q \cdot stm(x) \cdot stm(y) \cdot r' ) \, | \, a \in \Sigma, \, r: \delta(p, a, X_i, Y_j) = (q, x, y), \, r' \in Lab(\delta) \},  \\
(3) &A_{\lambda} = \{ ( p X_i Y_j r, \Sigma \cup \hat{\Gamma}, \hat{q} \cdot stm(\hat{x}) \cdot stm(\hat{y}) \cdot r' ) \, | \, r: \delta(p, \lambda, X_i, Y_j) = (q, x, y), \, r' \in Lab(\delta) \},  \\
&\hat{A}_{\lambda} = \{ ( \hat{p} \hat{X_i} \hat{Y_j} r, \Sigma \cup \Gamma, q \cdot stm(x) \cdot stm(y) \cdot r' ) \, | \, r: \delta(p, \lambda, X_i, Y_j) = (q, x, y), \, r' \in Lab(\delta) \},  \\
(4) &A_{X} = \{ ( X^2_k, \hat{Q} \cup \hat{\Gamma} \cup (Lab(\delta) - \{ r \}) \cup \{ f' \},  \hat{X}^{2^{|x|}}_k ) \, | \, 0 \le k \le n, \, r: \delta(p, a, X_i, Y_j) = (q, x, y) \},  \\
&\hat{A}_{X} = \{ ( \hat{X}^2_k, Q \cup \Gamma \cup (Lab(\delta) - \{ r \}) \cup \{ f' \},  X^{2^{|x|}}_k ) \, | \, 0 \le k \le n, \, r: \delta(p, a, X_i, Y_j) = (q, x, y) \},  \\
&A_{Y} = \{ ( Y^2_k, \hat{Q} \cup \hat{\Gamma} \cup (Lab(\delta) - \{ r \}) \cup \{ f' \},  \hat{Y}^{2^{|y|}}_k ) \, | \, 0 \le k \le m, \, r: \delta(p, a, X_i, Y_j) = (q, x, y) \},  \\
&\hat{A}_{Y} = \{ ( \hat{Y}^2_k, Q \cup \Gamma \cup (Lab(\delta) - \{ r \}) \cup \{ f' \}, Y^{2^{|y|}}_k ) \, | \, 0 \le k \le m, \, r: \delta(p, a, X_i, Y_j) = (q, x, y) \},  \\
(5) &A_f = \{ ( f, \hat{\Gamma}, f' ) \},  \\
&\hat{A}_f = \{ ( \hat{f}, \Gamma, f' ) \}.  
\end{align*}

\begin{proof}%(Sketch)  
We shall give an informal description on how to simulate $M$ with 
an input $w=a_1a_2\cdots a_{\ell}$ in terms of $\mathcal{A}_M$ constructed above. 

$M$ starts its computation from the state $p_0$ with $X_0$ and $Y_0$ on the top of stack-1 and stack-2, respectively. This initial step 
is performed in $\mathcal{A}_M$ by applying a reaction in $A_0$ to $D_0=p_0X_0Y_0$ together with $a_1$.  In order to read the whole input $w$ into $\mathcal{A}_M$, applying reactions in (2) and (4) leads 
to an interactive process in $\mathcal{A}_M$ :  $D_0 \rightarrow^{a_1} D_1 \rightarrow^{a_2} D_2 
\rightarrow^{a_3} \cdots \rightarrow^{a_\ell} D_{\ell}$, where $D_{\ell}$  just corresponds to  
the configuration of $M$ depicted in (b)-1 of Figure \ref{sim}.  After this point, only reactions from (3), (4) and (5) are  available in $\mathcal{A}_M$, because  $M$ makes only $\lambda$-moves.

Suppose that for $k\geq 1$,   after making $k$-steps $M$ is in the state 
$p$ and has  $\alpha_k \in \Gamma_1^*$ and $\beta_k \in \Gamma_2^*$ 
on the stack-1 and the stack-2, respectively. Then,   from the manner of constructing $A$, it is seen  
that in the corresponding interactive process in $\mathcal{A}_M$,  we have :
\begin{eqnarray*}
\left\{ \begin{array}{ll}
 D_k = p \cdot stm(\alpha_k) \cdot stm(\beta_k) \cdot r & \mbox{(if $k$ is even)} \\
 D_k = \hat{p} \cdot stm(\hat{\alpha}_k) \cdot stm(\hat{\beta}_k) \cdot r & \mbox{(if $k$ is odd)} \\
\end{array} \right.
\end{eqnarray*}
for some $r \in Lab(\delta)$, where the rule labeled by $r$ may be used at the $(k+1)$-th step. 
(Recall that $stm(x)$ is a multiset,  in a special 2-power form, representing a string $x$.)  
Thus,  the multisubset  ``$stm(\alpha_k)stm(\beta_k)$'' in $D_k$ is  denoted by the strings in either 
$\Gamma^*$ or $\hat{\Gamma}^*$ in an alternate fashion, depending upon the value $k$.  
 Since there is no essential difference between strings denoted by $\Gamma^*$ and its hat version, 
we only argue about the case when $k$ is even.

Suppose that $M$ is in the state $p$ and has $\alpha=X_{i1}\cdots X_{it}X_0$  on the  stack-1 and 
$\beta=Y_{j1}\cdots Y_{js}Y_0$  on the  stack-2, where the leftmost element is the top symbol of the stack. Further,  
let $r$ be the label of a transition $\delta(p, a_{k+1}, X_{i1},Y_{j1})=(q,x,y)$ (if $1 \le k \le l-1$) or $\delta(p,\lambda, X_{i1},Y_{j1})=(q,x,y)$ (if $l \le k$) in $M$ to be applied.  Then, the two stacks are 
 updated as $\alpha'=x X_{i2}\cdots X_{it}X_0$  and  $\beta'=y Y_{j2}\cdots Y_{js}Y_0$. 
 In order to simulate this move  of $M$, we need to prove that  
it is possible in $\mathcal{A}_M$, $D_k \rightarrow^{a_{k+1}} D_{k+1}$ (if $1 \le k \le l-1$) or $D_k \rightarrow D_{k+1}$ (if $l \le k$), where
\begin{align*}
&D_k = p\cdot stm(X_{i1} {X}_{i2}\cdots {X}_{it}{X_0}) \cdot stm({Y_{j1}} {Y}_{j2}\cdots {Y}_{js}{Y_0}) r  \\
&D_{k+1} = \hat{q}\cdot stm(\hat{x} \hat{X}_{i2}\cdots \hat{X}_{it}\hat{X_0}) \cdot stm(\hat{y} \hat{Y}_{j2}\cdots \hat{Y}_{js}\hat{Y_0}) r' 
\end{align*}
for some $r'\in Lab(\delta)$.  Taking a close look at $D_k$, we have that
\[  
D_k=p X_{i1}Y_{j1} r \cdot X^2_{i2} {X}^{2^2}_{i3} \cdots {X}^{2^{t-1}}_{it}{X}^{2^t}_0 \cdot  Y^2_{j2} {Y}^{2^2}_{j3} \cdots {Y}^{2^{s-1}}_{js} {Y}^{2^s}_0,
\]
from which it is easily seen that a multiset of reactions ${\bf z}=
{\bf r x_{i2}\cdots x}_{\bf it}^{2^{t-2}} {\bf x}_{\bf 0}^{2^{t-1}} {\bf y_{j2} \cdots y}_{\bf js}^{2^{s-2}} {\bf y}_{\bf 0}^{2^{s-1}}$ is in $En^p_{\mathcal{A}_M}(a_{k+1} + D_k)$ (if $1 \le k \le l-1$) or in $En^p_{\mathcal{A}_M}(D_k)$ (if $l \le k$), 
i.e., it is enabled by $a_{k+1} + D_k$ (if $1 \le k \le l-1$) or $D_k$ (if $l \le k$) in maximally parallel manner,  where
\begin{eqnarray*}
\left\{ \begin{array}{ll}
 {\bf r}&=(p a_{k+1} X_{i1}Y_{j1} r, \hat{\Gamma}, \hat{q} \cdot stm(\hat{x})stm(\hat{y}) r') \in A_{a} \mbox{ (if $1 \le k \le l-1$)} \\
 {\bf r}&=(p X_{i1}Y_{j1} r, \Sigma\cup \hat{\Gamma}, \hat{q} \cdot stm(\hat{x})stm(\hat{y}) r') \in A_{\lambda} \mbox{ (if $l \le k$)}, \\
\end{array} \right.
\end{eqnarray*} 
for some $r' \in Lab(\delta)$, \\[-6mm]
\begin{align*}
{\bf x}_i&=(X^2_{i},\hat{Q}\cup \hat{\Gamma} \cup Lab(\delta)-\{r\} \cup \{ f' \}, \hat{X}^{2^{|x|}}_i) \in A_X \mbox{ (for $i=0, i2,\ldots, it$)}, \\
{\bf y}_j&=(Y^2_{j},\hat{Q} \cup \hat{\Gamma} \cup Lab(\delta)-\{r\} \cup \{ f' \}, \hat{Y}^{2^{|y|}}_j) \in A_Y \mbox{ (for $j=0, j2,\ldots, js$)}.
%\mbox{and }  r &: \delta(p, X_{i1}, Y_{j1})=(q,x,y),
\end{align*}
The result of the multiset of the reactions ${\bf z}$ is 
\begin{align*}  
&\hat{q} \cdot stm(\hat{x})stm(\hat{y})r'\cdot \hat{X}^{2^{|x|}}_{i2}  \cdots \hat{X}^{2^{t-2+|x|}}_{it}\hat{X}^{2^{t-1+|x|}}_0 \cdot  \hat{Y}^{2^{|x|}}_{j2}  \cdots \hat{Y}^{2^{s-2+|x|}}_{js} \hat{Y}^{2^{s-1+|x|}}_0 \\
= \, &\hat{q} \cdot stm(\hat{x} \hat{X}_{i2}\cdots \hat{X}_{it}\hat{X_0}) \cdot stm(\hat{y} \hat{Y}_{j2}\cdots \hat{Y}_{js}\hat{Y_0}) r' \\
= \, &D_{k+1}
\end{align*}
Thus, in fact it holds that $D_k \rightarrow^{a_{k+1}} D_{k+1}$ (if $1 \le k \le l-1$) or $D_k \rightarrow D_{k+1}$ (if $l \le k$) in $\mathcal{A}_{M}$. 

We note that there is a possibility that undesired reaction  ${\bf r'}$ can be enabled at the $(k+1)$th step, where ${\bf r'}$ is of the form
\begin{eqnarray*}
\left\{ \begin{array}{ll}
 {\bf r'}&=(p a_{k+1} X_{iu}Y_{jv} r, \hat{\Gamma}, \hat{q'} \cdot stm(\hat{x'})stm(\hat{y'}) r') \in A_{a} \mbox{ (if $1 \le k \le l-1$)}\\
 {\bf r'}&=(p X_{iu}Y_{jv} r, \Sigma\cup \hat{\Gamma}, \hat{q'} \cdot stm(\hat{x'})stm(\hat{y'}) r') \in A_{\lambda} \mbox{ (if $l \le k$)}, \\
\end{array} \right.
\end{eqnarray*}
with $u \ne 1$ or $v \ne 1$, that is, the reactant of ${\bf r'}$ contains a stack symbol which is not the top of stack. If a multiset of reactions ${\bf z'}={\bf r' x'_1\cdots x'_{t'} y'_1 \cdots y'_{s'}}$ with ${\bf x'_1, \ldots, x'_{t'}} \in A_X$, ${\bf y'_1, \ldots, y'_{s'}} \in A_Y$ is used at the $(k+1)$th step, then $D_{k+1}$ contains {\it both} the symbols without hat (in $\Gamma$) and the symbols with hat (in $\hat{Q}$ and $\hat{\Gamma}$). This is because in this case, $X_{i1}$ or $Y_{j1}$ in $D_k$ which is not consumed at the $(k+1)$-th step remains in $D_{k+1}$ (since the total numbers of $X_{i1}$ and $Y_{j1}$ are {\it odd}, these objects cannot be consumed out by the reactions from (4)). Hence, no reaction is enabled at the $(k+2)$-th step and $f'$ is never derived after this wrong step.

From the arguments above, it holds that for  an input $w \in \Sigma^*$,
 $M$ enters the final state $f$  (and halts) 
 if and only if there exists $\pi : D_0, \ldots ,D_i, \ldots \in IP(\mathcal{A}_M,w)$ such that $D_{k-1}$ contains $f$ or $\hat{f}$, $D_k$ contains $f'$, and $\pi$ converges on $D_k$, for some $k \ge 1$. 
Therefore,  we have that $L(M) = L(\mathcal{A}_M)$ holds.  

\end{proof}

\begin{cor}
Every recursively enumerable language is accepted by a reaction automaton.
\end{cor}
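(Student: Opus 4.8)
The plan is to obtain the Corollary as an immediate combination of the two results already in hand, namely Proposition 1 and Theorem~\ref{teiri1}. Let $L$ be an arbitrary recursively enumerable language. By Proposition 1 (Theorem 8.13 in \cite{HMU:03}), there is a restricted two-stack machine $M$ with $L(M)=L$. Applying Theorem~\ref{teiri1} to this $M$ yields a reaction automaton $\mathcal{A}_M$ with $L(\mathcal{A}_M)=L(M)=L$. Hence $L$ is accepted by a reaction automaton, which is precisely the assertion of the Corollary.

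The one point that deserves a word of care is that both ingredients must give \emph{exact} acceptance rather than acceptance only up to some encoding. Proposition 1 asserts that the restricted two-stack machine accepts $L$ itself over the same input alphabet $\Sigma$, and the construction preceding the proof of Theorem~\ref{teiri1} takes $\Sigma$ as the input alphabet of $\mathcal{A}_M$ and establishes $L(M)=L(\mathcal{A}_M)$ verbatim; the $stm$-encoding used inside the construction affects only the internal 2-power representation of stack contents, not the set of strings accepted. Chaining the two equalities therefore gives $L=L(\mathcal{A}_M)$ with $\mathcal{A}_M$ a genuine reaction automaton in the sense of the definition in Section 3, so no extra bookkeeping is required.

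There is essentially no obstacle at the level of the Corollary: the substantive work is entirely contained in the construction of $\mathcal{A}_M$ and in the verification, carried out in the proof of Theorem~\ref{teiri1}, that a single computation step of $M$ corresponds under the maximally parallel semantics to exactly one correct interactive-process step, while every ``wrong'' choice of reactions produces a configuration mixing hatted and unhatted symbols from which $f'$ can never be derived. Once that is granted, the Corollary is a one-line inference. If one wished to record the full equivalence advertised in the abstract, one would additionally observe that every reaction-automaton language is recursively enumerable, since an interactive process of $\mathcal{A}$ on $w$ can be enumerated by a Turing machine; but this direction is not needed for the Corollary as stated.
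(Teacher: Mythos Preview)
Your proof is correct and is exactly the argument the paper intends: the Corollary is stated without proof immediately after Theorem~\ref{teiri1}, being the direct combination of Proposition~1 and Theorem~\ref{teiri1}. Your additional remarks about the input alphabet being preserved and the converse direction being unnecessary are accurate elaborations, but the paper itself leaves the Corollary as a one-line consequence.
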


Recall the way of constructing reactions $A$ of $\mathcal{A}_M$ in the proof of Theorem 1. The reactions in categories (1), (2), (3) would not satisfy the condition of determinacy which is given immediately below. However, we can easily modify $\mathcal{A}_M$ to meet the condition.

\begin{de}{\rm 
Let $\mathcal{A}_M = (S, \Sigma, A, D_0, f')$ be an RA. Then, $\mathcal{A}_M$ is 
{\it deterministic} if for $a=(R, I, P), a' =(R', I', P') \in A$,  $(R = R') \wedge (I = I')$ implies that $a = a'$.
}
\end{de}

\begin{thm}
If a language $L$ is accepted by a restricted two-stack machine, then $L$ is accepted by a deterministic reaction automaton.
\end{thm}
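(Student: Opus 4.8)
The plan is to re-run the construction of Theorem~\ref{teiri1} almost verbatim, after a harmless preprocessing of the restricted two-stack machine that removes the single feature responsible for the failure of determinacy. First I would pin down exactly where determinacy fails in $\mathcal{A}_M$: the families $A_0,A_a,\hat A_a,A_\lambda,\hat A_\lambda$ each contain, for a \emph{fixed} transition $r$ of $M$, one reaction for every $r'\in Lab(\delta)$, all sharing the same reactant and the same inhibitor and differing only in the product, which carries the \emph{guessed} label $r'$ of the transition to be applied next. The reactions of category (4) already satisfy the determinacy condition, since their inhibitors contain $Lab(\delta)\setminus\{r\}$ and therefore separate distinct transitions $r$. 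So it is enough to arrange that the ``next label'' be a \emph{function} of the reaction currently applied, rather than a nondeterministically chosen one.

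The obstacle to doing this directly is that when $M$ pops a stack, the symbol it thereby uncovers --- which, together with the state, decides the next transition --- does not occur in the reactant that performed the pop. I would remove this obstacle by preprocessing: replace $M$ by an equivalent restricted two-stack machine $M'$ in which every stack symbol is a pair $\langle X,Z\rangle$ storing the genuine symbol $X$ together with the symbol $Z$ lying immediately below it (a fixed marker being used in the second component at the bottom of a stack). This is a routine step-by-step simulation: it preserves properties ($i$), ($ii$), ($iii$), satisfies $L(M')=L(M)$, and the transitions of $M'$ can be labelled so that the second components of the pairs do not affect the label. What $M'$ buys us is that after performing a transition one can \emph{name} the new top symbol of each stack: it is the first symbol of the string pushed, if anything was pushed, and otherwise it is the second component of the pair just popped. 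Hence during the $\lambda$-phase of the computation (where the input plays no role) the label of the transition to be applied next is a function of the transition just applied.

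I would then carry out the construction of Theorem~\ref{teiri1} on $M'$ with two modifications. In the products of the $\lambda$-phase reactions, replace the guessed $r'$ by this uniquely determined next label; then every such reaction has a single product for its $(R,I)$, and the reactions of category (4), now indexed by the determined next label, stay deterministic as well --- the extra pair-bookkeeping merely enlarges $\Gamma_1,\Gamma_2$ and is transparent to all the $stm$/multiset arithmetic. The input-reading phase needs separate but easy care, because there the next transition also depends on the still-unread next input symbol: during that phase I would let the product carry a generic per-state ``reading marker'' instead of a transition label, and --- since the reading phase of a restricted two-stack machine merely copies the input onto stack-2 --- normalize it so that every reading transition pushes a constant number of symbols, which makes the shift performed by the category-(4) reactions during this phase a fixed constant and hence requires no label-based selection (so no clash of $(R,I)$ arises there). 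End-of-input is handled exactly as in the proof of Theorem~\ref{teiri1}. Finally, I would re-verify correctness: the simulation invariant of Theorem~\ref{teiri1} (the shape of $D_k$, alternating between strings over $\Gamma$ and over $\hat\Gamma$) is untouched, and the ``wrong-move'' reactions --- those that at some step consume a stack symbol not on top, hence one present with even multiplicity --- still leave an odd, hence nonzero, residue of a non-hat symbol in the next multiset, which by the very same inhibitor argument used in Theorem~\ref{teiri1} disables every reaction at the following step, so $f'$ is never produced along such a branch; thus $L(\mathcal{A}_{M'})=L(M')=L(M)=L$ and $\mathcal{A}_{M'}$ is deterministic.

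I expect the delicate part to be precisely the input-reading phase, where the next transition genuinely cannot be predicted, so one must take care not to reintroduce several reactions with a common $(R,I)$ --- which is why the normalization making the phase-one shift constant is used --- together with the routine but slightly fussy re-check that the ``wrong-move deadlocks'' argument survives the pair-encoding.
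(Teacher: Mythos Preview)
Your proposal is essentially correct, but it takes a genuinely different and considerably more elaborate route than the paper.

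You eliminate the nondeterministic guess of the next label $r'$ by making it a \emph{function} of the transition just applied: this forces you to preprocess $M$ into $M'$ with the pair encoding (so that a pop reveals the new top symbol), and then to handle the input-reading phase separately because there the next transition genuinely depends on still-unread input. The plan is sound, and the wrong-move deadlock argument does survive the pair encoding; the input-phase normalization you describe is also workable given the very specific form of the first phase of a restricted two-stack machine.

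The paper, however, does not eliminate the guess at all. It observes that the determinacy condition only asks that no two distinct reactions share the same pair $(R,I)$; it does \emph{not} demand that the computation itself be deterministic. So the paper keeps every nondeterministic branch of the Theorem~\ref{teiri1} construction intact and simply \emph{separates the inhibitors}: for each reaction in categories (1)--(3) whose product carries a guessed label $r'$, the paper adds (the appropriate hat/non-hat copy of) that same $r'$ to the reaction's inhibitor set. Different guesses now yield different inhibitors, so the $(R,I)$ pairs become pairwise distinct and the determinacy condition is met. Operationally nothing changes, because the symbol added to the inhibitor is never present in the current multiset at the moment the reaction is to fire (labels alternate between $Lab(\delta)$ and its hatted copy). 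The entire modification is a handful of extra symbols in the construction; no preprocessing of $M$, no special treatment of the input phase, and no new case analysis are needed.

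In short, your route buys something close to operational determinism along the intended computation path, at the cost of a nontrivial machine preprocessing and the delicate input-phase treatment you yourself flag as the hard part; the paper's route buys exactly the syntactic determinacy the definition asks for with a one-line change to the inhibitors.
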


\begin{proof}
Let $M = (Q, \Sigma, \Gamma_1, \Gamma_2, \delta, p_0, {X}_0, {Y}_0, f)$  be a restricted two-stack machine. For the RA $\mathcal{A}_M = (S, \Sigma, A, D_0, f')$ constructed for the proof of Theorem 1, we consider $\mathcal{A}'_M = (S \cup \hat{Lab(\delta)}, \Sigma, A', D_0, f')$, where $A'$ consists of the following 5 categories : 
\begin{align*}
(1) &A_0 =\{ ( p_0 a X_0 Y_0, Lab(\delta) \cup \{ \hat{r'} \},  \hat{q} \cdot stm(\hat{x}) \cdot stm(\hat{y}) \cdot \hat{r'} ) \, | \, r: \delta(p_0, a, X_0, Y_0) = (q, x, y), \, r' \in Lab(\delta) \},  \\
(2) &A_a = \{ ( p a X_i Y_j r, \hat{\Gamma} \cup \{ \hat{r'} \}, \hat{q} \cdot stm(\hat{x}) \cdot stm(\hat{y}) \cdot \hat{r'} ) \, | \, a \in \Sigma, \, r: \delta(p, a, X_i, Y_j) = (q, x, y), \, r' \in Lab(\delta) \},  \\
&\hat{A}_a = \{ ( \hat{p} a \hat{X_i} \hat{Y_j} r, \Gamma \cup \{ r' \}, q \cdot stm(x) \cdot stm(y) \cdot r' ) \, | \, a \in \Sigma, \, r: \delta(p, a, X_i, Y_j) = (q, x, y), \, r' \in Lab(\delta) \},  \\
(3) &A_{\lambda} = \{ ( p X_i Y_j r, \Sigma \cup \hat{\Gamma} \cup \{ \hat{r'} \}, \hat{q} \cdot stm(\hat{x}) \cdot stm(\hat{y}) \cdot \hat{r'} ) \, | \, r: \delta(p, \lambda, X_i, Y_j) = (q, x, y), \, r' \in Lab(\delta) \},  \\
&\hat{A}_{\lambda} = \{ ( \hat{p} \hat{X_i} \hat{Y_j} r, \Sigma \cup \Gamma \cup \{ r' \}, q \cdot stm(x) \cdot stm(y) \cdot r' ) \, | \, r: \delta(p, \lambda, X_i, Y_j) = (q, x, y), \, r' \in Lab(\delta) \},  \\
(4) &A_{X} = \{ ( X^2_k, \hat{Q} \cup \hat{\Gamma} \cup (\hat{Lab(\delta)} - \{ \hat{r} \}) \cup \{ f' \},  \hat{X}^{2^{|x|}}_k ) \, | \, 0 \le k \le n, \, r: \delta(p, a, X_i, Y_j) = (q, x, y) \},  \\
&\hat{A}_{X} = \{ ( \hat{X}^2_k, Q \cup \Gamma \cup (Lab(\delta) - \{ r \}) \cup \{ f' \},  X^{2^{|x|}}_k ) \, | \, 0 \le k \le n, \, r: \delta(p, a, X_i, Y_j) = (q, x, y) \},  \\
&A_{Y} = \{ ( Y^2_k, \hat{Q} \cup \hat{\Gamma} \cup (\hat{Lab(\delta)} - \{ \hat{r} \}) \cup \{ f' \},  \hat{Y}^{2^{|y|}}_k ) \, | \, 0 \le k \le m, \, r: \delta(p, a, X_i, Y_j) = (q, x, y) \},  \\
&\hat{A}_{Y} = \{ ( \hat{Y}^2_k, Q \cup \Gamma \cup (Lab(\delta) - \{ r \}) \cup \{ f' \},  Y^{2^{|y|}}_k ) \, | \, 0 \le k \le m, \, r: \delta(p, a, X_i, Y_j) = (q, x, y) \},  \\
(5) &A_f = \{ ( f, \hat{\Gamma}, f' ) \},  \\
&\hat{A}_f = \{ ( \hat{f}, \Gamma, f' ) \}.  
\end{align*}

The reactions in categories (1), (2), (3) in $A'$ meet the condition where $\mathcal{A'}_M$ is deterministic, since the inhibitor of each reaction includes $r'$ or $\hat{r'}$. We can easily observe that the equation $L(M) = L(\mathcal{A'}_M)$ is proved in the manner similar to the proof of Theorem 1.

\end{proof}  

\begin{cor}
Every recursively enumerable language is accepted by a deterministic reaction automaton.
\end{cor}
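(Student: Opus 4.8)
The plan is simply to compose the two results already established. Let $L$ be an arbitrary recursively enumerable language. First I would invoke Proposition 1 to obtain a restricted two-stack machine $M$ with $L(M) = L$. Then I would apply Theorem 2 to this same $M$, which yields a deterministic reaction automaton $\mathcal{A}'_M$ with $L(\mathcal{A}'_M) = L(M)$. Chaining the two equalities gives $L(\mathcal{A}'_M) = L$, so $L$ is accepted by a deterministic reaction automaton, which is exactly the claim. Put differently, the corollary is just the transitivity of the inclusions ``recursively enumerable languages'' $\subseteq$ ``languages of restricted two-stack machines'' $\subseteq$ ``languages of deterministic reaction automata'', the first inclusion being Proposition 1 and the second being Theorem 2.

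There is essentially no obstacle here; the only points worth checking are bookkeeping ones. One is that the acceptance criteria line up: both Proposition 1 and Theorem 2 are phrased directly in terms of the language $L(\cdot)$ of the respective device, so no reinterpretation of ``accepted'' is needed. The other is that determinism comes for free rather than requiring an extra layer of construction --- and it does, because Theorem 2 already delivers a \emph{deterministic} reaction automaton $\mathcal{A}'_M$ (the reactions in its categories (1)--(3) were designed precisely so that equal reactant--inhibitor pairs force equal reactions, via the fresh inhibitor symbol $\hat{r'}$ or $r'$). So the proof is one line: by Proposition 1 and Theorem 2.

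If one wished to strengthen the corollary to the exact characterization ``a language is accepted by a deterministic reaction automaton if and only if it is recursively enumerable'', I would add a routine reverse-inclusion argument: for a fixed RA $\mathcal{A}$ and an input $w$, the set $En^p_A(T)$ is finite and effectively computable from $T$, hence each step $D_i \rightarrow D_{i+1}$ is, and a Turing machine can dovetail over all $\pi \in IP(\mathcal{A}, w)$ and accept exactly when some $\pi$ converges on a $D_m$ with $f \subseteq D_m$; this shows $L(\mathcal{A})$ is recursively enumerable. For the corollary as stated, though, this direction is not needed and I would omit it.
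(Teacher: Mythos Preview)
Your proof is correct and matches the paper's approach exactly: the corollary is stated immediately after Theorem~2 without further argument, the intended one-line proof being precisely the composition of Proposition~1 and Theorem~2 that you describe.
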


\section{Space Complexity Classes of RAs}

We now consider space complexity issues of reaction automata. That is, we introduce some subclasses of reaction automata and investigate the relationships between classes of languages accepted by those subclasses of automata and language classes in the Chomsky hierarchy.

Let $\mathcal{A}$ be an RA and $f$ be a function defined on $\mathbf{N}$.  
Motivated by the notion of a workspace for a phrase-structure grammar (\cite{AS:73}), we define: for $w\in L(\mathcal{A})$ with $n=|w|$, and for $\pi$ in $IP(\mathcal{A},w)$  that  converges on  $D_m$ for some $m \geq n$ and $D_m$ includes the final state, 
\[
WS(w,\pi) = \underset{i}{{\rm max}} \{|D_i| \mid  D_i \mbox{ appears in } \pi \ \}. 
\]
Further, the {\it workspace of} $\mathcal{A}$ {\it for} $w$ is defined as:
\[
WS(w,\mathcal{A}) =\underset{\pi}{{\rm min}} \{WS(w,\pi)  \mid \pi \in IP(\mathcal{A},w) \}.  
\]

\begin{de}{\rm 
($i$). An RA  $\mathcal{A}$ is {\it $f(n)$-bounded} if for any $w\in L(\mathcal{A})$ with $n=|w|$,   $WS(w,\mathcal{A})$ is bounded by $f(n)$. \\
($ii$). 
If a function $f(n)$ is a constant $k$ (resp. linear, polynomial,  exponential), 
then  $\mathcal{A}$ is termed $k$-bounded (resp. linearly-bounded,  polynomially-bounded,  exponentially-bounded), and denoted by 
$k$-RA (resp. $lin$-RA, $poly$-RA, $exp$-RA). Further, 
the class of languages accepted by $k$-RA (resp. $lin$-RA,  $poly$-RA, $exp$-RA, arbitrary RA) is denoted by $k$-$\mathcal{RA}$ 
(resp. $\mathcal{LRA, PRA, ERA, RA}$). 
}
\end{de}

Let us denote by $\mathcal{REG}$ (resp. $\mathcal{LIN, CF, CS, RE}$) 
 the class of regular (resp. linear context-free, context-free, context-sensitive, recursively enumerable) languages.

\begin{exam}{\rm 
Let $L_1 = \{ a^n b^n c^n \, | \, n \ge 0 \}$ and consider an RA $\mathcal{A}_1 = (S, \Sigma, A, D_0, f)$ defined as follows:
\begin{align*}
&S = \{ a, b, c, d, a', b', c', f \} \mbox{ with }  \Sigma=\{a, b, c\},  \\
&A = \{ {\bf a}_1, {\bf a}_2, {\bf a}_3, {\bf a}_4 \}, \ \mbox{where} \\
&\text{\ \ \ \ \ \ } {\bf a}_1 = ( a, bb', a' ), \,{\bf a}_2 = ( a'b, cc', b' ),\, {\bf a}_3 = ( b'c, \emptyset, c' ), \, {\bf a}_4 = ( d, abca'b', f ), \\
&D_0 = d.
\end{align*}
Then, it holds that $L_1 = L(\mathcal{A}_1)$ (see Figure~\ref{ra-anbncn}).
}
\end{exam}

\begin{exam}{\rm 
Let $L_2 = \{ a^m b^m c^n d^n \, | \, m, n \ge 0 \}$ and consider an RA $\mathcal{A}_2 = (S, \Sigma, A, D_0, f)$ defined as follows:
\begin{align*}
&S = \{ a, b, c, d, a', c', p_0, p_1, p_2, p_3, f \} \mbox{ with }  \Sigma=\{a, b, c, d \},  \\
&A = \{ {\bf a}_1, {\bf a}_2, {\bf a}_3, {\bf a}_4, {\bf a}_5, {\bf a}_6, {\bf a}_7, {\bf a}_8, {\bf a}_9, {\bf a}_{10}, {\bf a}_{11} \}, \ \mbox{where} \\
&\text{\ \ \ \ \ \ } {\bf a}_1 = ( ap_0, bc, a'p_0 ), \,{\bf a}_2 = ( a'bp_0, c, p_1 ),\, {\bf a}_3 = ( a'bp_1, c, p_1 ), \, {\bf a}_4 = ( cp_0, d, c'p_2 ), \\
&\text{\ \ \ \ \ \ } {\bf a}_5 = ( cp_1, d, c'p_2 ), \, {\bf a}_6 = ( cp_2, d, c'p_2 ), {\bf a}_7 = ( c'dp_2, \emptyset, p_3 ), \,{\bf a}_8 = ( c'dp_3, \emptyset, p_3 ), \\
&\text{\ \ \ \ \ \ } {\bf a}_9 = ( p_0, abcd, f ), \, {\bf a}_{10} = ( p_1, abcda', f ), \, {\bf a}_{11} = ( p_3, abcda'c', f ), \\
&D_0 = p_0.
\end{align*}
Then, it holds that $L_2 = L(\mathcal{A}_2)$ (see Figure~\ref{ra-anbncmdm}).
}
\end{exam}
It should be noted that $\mathcal{A}_1$ and $\mathcal{A}_2$ are both 
$lin$-RAs, therefore, the class of languages $\mathcal{LRA}$ includes 
a context-sensitive language $L_1$  and a non-linear context-free language $L_2$.    

\begin{figure}[t]
\centerline{
\includegraphics[scale=0.63]{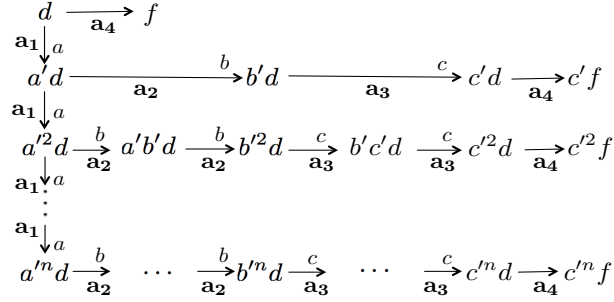}}
\caption{Reaction diagram of $\mathcal{A}_1$ which accepts $L_1 = \{ a^n b^n c^n \, | \, n \ge 0 \}$.}
\label{diagram}
\label{ra-anbncn}
\end{figure}

\begin{figure}[t]
\centerline{
\includegraphics[scale=0.65]{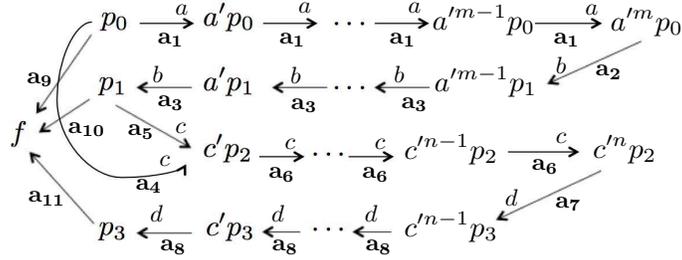}}
\caption{Reaction diagram of $\mathcal{A}_2$ which accepts $L_2 = \{ a^m b^m c^n d^n \, | \, m, n \ge 0 \}$.}
\label{diagram}
\label{ra-anbncmdm}
\end{figure}

\begin{lem}
For an alphabet $\Sigma$ with $|\Sigma| \ge 2$, let $h:\Sigma^* \rightarrow \Sigma^*$ be an injection such that for any $w \in \Sigma^*$, $|h(w)|$ is bounded by a polynomial of $|w|$. Then, there is no polynomially-bounded reaction automaton which accepts the language $L = \{ wh(w) \, | \,  w \in \Sigma^* \}$. \label{lem-wfw}
\end{lem}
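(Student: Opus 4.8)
The plan is to argue by a counting (pigeonhole) argument comparing the number of distinct strings of a given length with the number of distinct configurations that a polynomially-bounded RA can reach within its workspace bound. Suppose, for contradiction, that $\mathcal{A} = (S, \Sigma, A, D_0, f)$ is a $poly$-RA with $L(\mathcal{A}) = L = \{wh(w) \mid w \in \Sigma^*\}$, and let $p(n)$ be the polynomial bounding $WS(w',\mathcal{A})$ for inputs $w' \in L$ of length $n$. First I would fix a large integer $N$ and restrict attention to strings $w \in \Sigma^N$; the corresponding inputs $wh(w)$ all have length $N + |h(w)| \le N + q(N)$ for some polynomial $q$, so along an accepting interactive process $\pi$ for $wh(w)$ realizing the minimum workspace, every multiset $D_i$ satisfies $|D_i| \le p(N + q(N)) =: P(N)$, which is still polynomial in $N$.

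The key observation is that after $\mathcal{A}$ has consumed the prefix $w$ (i.e., after $N$ input-reading steps), but before it has begun to read the suffix $h(w)$, the current multiset $D_N$ must encode enough information to distinguish $w$ from every other $w' \in \Sigma^N$ — otherwise, if $D_N = D_N'$ for the accepting processes of $wh(w)$ and $w'h(w')$ with $w \ne w'$, then by splicing we could run the tail of the second process (which reads $h(w')$) starting from $D_N$, producing an accepting process for the input $wh(w')$; since $h$ is injective and $w \ne w'$, $wh(w') \notin L$, a contradiction. Here I would need to be slightly careful: the step-by-step evolution $D_{i+1} \in Res_A(a_{i+1}+D_i)$ for $i < |w'h(w')|$ and $D_{i+1}\in Res_A(D_i)$ afterward depends only on the current multiset and the remaining input, so the splice at position $N$ (the boundary between $w$ and $h(w)$) is legitimate, and convergence with $f \subseteq D_m$ is preserved. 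Hence the map $w \mapsto D_N^{(w)}$ from $\Sigma^N$ into the set of multisets over $S$ of weight at most $P(N)$ is injective.

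Now I would count: the number of multisets over the fixed finite alphabet $S$ of weight at most $P(N)$ is at most $\binom{P(N) + |S|}{|S|}$, which is polynomial in $N$ (of degree $|S|$), whereas $|\Sigma^N| = |\Sigma|^N$ grows exponentially since $|\Sigma| \ge 2$. For $N$ large enough $|\Sigma|^N > \binom{P(N)+|S|}{|S|}$, contradicting injectivity. This completes the argument.

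The main obstacle — and the step deserving the most care in the write-up — is justifying the splicing rigorously, namely that the "state" of $\mathcal{A}$ at the moment it finishes reading $w$ is captured \emph{entirely} by the multiset $D_N$ together with the knowledge that the remaining input is whatever we feed next. Because RAs are nondeterministic and proceed in maximally parallel manner, one must phrase this as: for any $w'$, there \emph{exists} an accepting process, and its segment from step $N$ onward is a valid interactive process on the "machine restarted from $D_N$ with input $h(w')$ still to come"; conversely any such valid tail can be prepended with the first $N$ steps of $w$'s process. One subtlety is ensuring the comparison is made at exactly the input-boundary index $N$ for \emph{both} processes, which is fine since both inputs share the common prefix length $N$ but have different prefixes; a second subtlety is that we are comparing $D_N^{(w)}$ arising from $wh(w)$'s process against the requirement that feeding $h(w')$ from $D_N^{(w)}$ must \emph{not} accept unless $w' = w$ — this is where injectivity of $h$ is used. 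Everything else is routine counting.
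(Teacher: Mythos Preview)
Your argument is essentially the paper's: both assume a $poly$-RA for $L$, bound the number of multisets that can occur at step $n$ (the moment the prefix $w$ has been consumed) by a polynomial in $n$, and obtain a contradiction against the $|\Sigma|^n$ many prefixes via a splicing argument showing that coinciding ``states'' at step $n$ would force acceptance of $w_2 h(w_1)$ for $w_1\neq w_2$. The one difference is packaging: the paper records for each $w$ the whole set $I_n(w)\subseteq\mathcal{D}_{p(n)}$ of reachable $n$-th multisets and argues these sets are pairwise incomparable, whereas you pick a single $D_N^{(w)}$ from a minimal-workspace accepting run and argue directly that $w\mapsto D_N^{(w)}$ is injective---your formulation is a bit cleaner and sidesteps having to bound the size of an antichain of subsets of $\mathcal{D}_{p(n)}$.
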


\begin{proof}
Assume that there is a {\it poly}-RA $\mathcal{A} = (S, \Sigma, A, D_0, f)$ such that $L(\mathcal{A}) = \{ wh(w) \, | \,  w \in \Sigma^* \}$.  Let $|S| = m_1$, $|\Sigma|= m_2 \ge 2$ and the input string be $wh(w)$ with $|w|=n$. 

Since $|h(w)|$ is bounded by a polynomial of $|w|$, $|wh(w)|$ is also bounded by a polynomial of $n$.  Hence, for each $D_i$ in an interactive process $\pi \in IP( \mathcal{A}, wh(w) )$, it holds that $|D_i| \le p(n)$ for some polynomial $p(n)$ from the definition of a {\it poly}-RA.

Let $\mathcal{D}_{p(n)} = \{ D \in S^\# \, | \, |D| \le p(n) \}$. Then, it holds that
\begin{align*}
&|\mathcal{D}_{p(n)}| = \sum^{p(n)}_{k = 0} {}_{m_1} \mathrm{H}_k = \sum^{p(n)}_{k = 0} \frac{(k + m_1 -1)!}{k! \cdot (m_1 -1) !} = \frac{(p(n) + m_1)!}{p(n)! \cdot m_1 !} = \frac{(p(n) + m_1)(p(n) + m_1 - 1) \cdots (p(n) + 1)}{ m_1 !}.\\
&({}_{m_1} \mathrm{H}_k \text{ denotes the number of repeated combinations of $m_1$ things taken $k$ at a time.})
\end{align*}
Therefore, there is a polynomial $p'(n)$ such that $|\mathcal{D}_{p(n)}| = p'(n)$. Since it holds that $|\Sigma^n| = (m_2)^n$, if $n$ is sufficiently large,  we obtain the inequality $|\mathcal{D}_{p(n)}| < |\Sigma^n|$. 

For $i \ge 0$ and $w \in \Sigma^*$, let $I_i(w) = \{ D_i \in \mathcal{D}_{p(n)} \, | \, \pi = D_0, \ldots, D_i, \ldots \in IP(\mathcal{A}, w) \} \subseteq \mathcal{D}_{p(n)}$, i.e., $I_i(w)$ is the set of multisets in $\mathcal{D}_{p(n)}$ which appear as the $i$-th elements of interactive processes in $ IP(\mathcal{A}, w)$. From the fact that $L(\mathcal{A}) = \{ wh(w) \, | \,  w \in \Sigma^* \}$ and $h$ is an injection, we can show that for any two distinct strings $w_1, w_2 \in \Sigma^n$, $I_n(w_1)$ and $I_n(w_2)$ are incomparable. This is because if $I_n(w_1) \subseteq I_n(w_2)$, the string $w_2 h(w_1)$ is accepted by $\mathcal{A}$, which means that $h(w_1) = h(w_2)$ and contradicts that $h$ is an injection. 

Since for any two distinct strings $w_1, w_2 \in \Sigma^n$, $I_n(w_1)$ and $I_n(w_2)$ are incomparable and $I_n(w_1), I_n(w_2) \subseteq \mathcal{D}_{p(n)}$, it holds that 
\[ | \{ I_n(w) \, | \, w \in \Sigma^n \} | \le |\mathcal{D}_{p(n)}| < |\Sigma^n|. \]
However, from the pigeonhole principle, the inequality $| \{ I_n(w) \, | \, w \in \Sigma^n \} | < |\Sigma^n|$ contradicts that for any two distinct strings $w_1, w_2 \in \Sigma^n$, $I_n(w_1) \ne I_n(w_2)$.
\end{proof}

\begin{thm} The following inclusions hold {\rm :}  \\
{\rm (1)}. $\mathcal{REG}=k$-$\mathcal{RA} 
\subset  \mathcal{LRA} \subseteq \mathcal{PRA} \subset \mathcal{ERA} \subseteq \mathcal{RA} =  \mathcal{RE}$ {\rm (for each $k\geq 1$)}. \\
{\rm (2)}. $\mathcal{LRA} \subset \mathcal{CS} \subseteq \mathcal{ERA}$. \\
{\rm (3)}. $\mathcal{LIN}$ {\rm (}$\mathcal{CF}${\rm )} and $\mathcal{LRA}$ are incomparable.
\end{thm}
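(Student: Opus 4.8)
The plan is to establish the three parts of the theorem more or less independently, reusing the machinery already built. For part (1), the equality $\mathcal{RA}=\mathcal{RE}$ is Corollary 1 (an RA is clearly simulable by a Turing machine, giving $\subseteq$, and Theorem~\ref{teiri1} with Proposition~1 gives $\supseteq$). The equality $\mathcal{REG}=k$-$\mathcal{RA}$ for every $k\ge 1$ is the first real task: the inclusion $\mathcal{REG}\subseteq 1$-$\mathcal{RA}$ is easy, since a DFA can be simulated keeping only the current state symbol in the multiset (weight $1$), reading input symbols one at a time; conversely, if $\mathcal{A}$ is $k$-bounded, then along an accepting computation every configuration $D_i$ satisfies $|D_i|\le k$, so there are only finitely many reachable configurations $\mathcal{D}_k=\{D\in S^\#\mid |D|\le k\}$, and one builds an NFA whose states are elements of $\mathcal{D}_k$, with a transition $D\xrightarrow{a}D'$ whenever $D'\in Res_A(a+D)$ and $\lambda$-transitions $D\to D'$ whenever $D'\in Res_A(D)$; acceptance corresponds to reaching a configuration containing $f$ from which the process converges. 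The chain of inclusions $k$-$\mathcal{RA}\subseteq\mathcal{LRA}\subseteq\mathcal{PRA}\subseteq\mathcal{ERA}\subseteq\mathcal{RA}$ is immediate from the definitions (a constant is linear, etc.). For the strictness $\mathcal{LRA}\subset\mathcal{PRA}\cdot$ — actually I would get strictness of $k$-$\mathcal{RA}\subset\mathcal{LRA}$ from Examples~4 and~5 (which give non-regular languages accepted by $lin$-RAs), and strictness of $\mathcal{ERA}\subset\mathcal{RA}$ by a counting/diagonalization argument: an $exp$-RA with background set of size $m$ has at most $2^{c\cdot n}$-many reachable configurations of weight $\le 2^{cn}$, hence its acceptance can be decided in space exponential in $n$, so $\mathcal{ERA}$ contains only languages in, say, $\mathrm{ESPACE}$, which is a proper subclass of $\mathcal{RE}$.

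For part (2), the inclusion $\mathcal{CS}\subseteq\mathcal{ERA}$ I would obtain by simulating a linear-bounded automaton: its tape of length $n$ has at most $c^n$ configurations, and encoding the tape contents together with head position as a multiset costs weight polynomial (even linear) in $n$ — so in fact one might hope for $\mathcal{CS}\subseteq\mathcal{LRA}$ or $\mathcal{PRA}$, but to be safe I claim only the exponential bound, which certainly suffices; the key point is that an LBA computation halts, and the RA can be driven to convergence. The strictness $\mathcal{LRA}\subset\mathcal{CS}$ splits into two halves: $\mathcal{LRA}\subseteq\mathcal{CS}$ follows because a $lin$-RA uses workspace linear in $|w|$, and the standard workspace theorem (this is precisely why the workspace definition was introduced, citing \cite{AS:73}) converts a linearly-bounded interactive process into a context-sensitive (equivalently, linear-bounded) recognizer; properness then follows from part (3), since $\mathcal{CS}$ contains all of $\mathcal{LIN}$ but $\mathcal{LRA}$ does not. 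Here the mild obstacle is checking that the nondeterministic, maximally-parallel RA computation really can be packed into an LBA without blowing up the space by more than a constant factor — one must be careful that a single RA step, which applies a whole multiset of reactions, is simulated within linear space.

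The heart of the theorem is part (3): $\mathcal{LIN}$ and $\mathcal{LRA}$ are incomparable, meaning neither contains the other. For $\mathcal{LRA}\not\subseteq\mathcal{LIN}$ (indeed $\not\subseteq\mathcal{CF}$), Example~4 already exhibits the non-context-free language $L_1=\{a^nb^nc^n\mid n\ge0\}$ as the language of a $lin$-RA; alternatively Example~5 gives a context-free but non-linear language in $\mathcal{LRA}$, which alone separates $\mathcal{LRA}$ from $\mathcal{LIN}$. For the reverse, $\mathcal{LIN}\not\subseteq\mathcal{LRA}$, I would invoke Lemma~\ref{lem-wfw}: take $\Sigma=\{a,b\}$ and the copy language $L=\{ww\mid w\in\Sigma^*\}$, which arises from the injection $h=\mathrm{id}$ with $|h(w)|=|w|$ trivially polynomial, so by Lemma~\ref{lem-wfw} no $poly$-RA — a fortiori no $lin$-RA — accepts $L$. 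It remains to observe that $\{ww\mid w\in\Sigma^*\}$ is a linear context-free language; I would exhibit a linear grammar (or a one-turn pushdown automaton) for it. If one worries that the copy language is in fact not linear, the safer choice is a marked copy language such as $\{w\,c\,w^R\mid w\in\Sigma^*\}$ or $\{w\,c\,w\mid w\in\Sigma^*\}$ over $\{a,b,c\}$: this is patently linear context-free (push $w$, match against the mirror or the copy), while Lemma~\ref{lem-wfw} applied with the injection $w\mapsto cw$ (or $w\mapsto cw^R$) still forbids a $poly$-RA. The main obstacle in the whole theorem is precisely this: choosing a witness language for $\mathcal{LIN}\setminus\mathcal{LRA}$ that simultaneously is linear context-free and fits the hypothesis of Lemma~\ref{lem-wfw}; once the witness is pinned down, everything else is bookkeeping with the earlier results.
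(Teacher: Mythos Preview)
Your overall architecture matches the paper's, but there are two genuine gaps.

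\textbf{The witness language.} Your primary choice $\{ww\mid w\in\Sigma^*\}$ is not linear context-free; it is the standard example of a language that is not even context-free (via the pumping lemma or Ogden's lemma). Your fallback $\{w\,c\,w\mid w\in\{a,b\}^*\}$ has the same defect: a pushdown automaton cannot ``match against the copy'' after the marker, because the stack reverses the string. Only your third candidate $\{w\,c\,w^R\}$ is actually linear context-free, and it does fit Lemma~\ref{lem-wfw} with $h(w)=cw^R$. The paper makes the cleaner choice $L_3=\{ww^R\mid w\in\{a,b\}^*\}$ (even palindromes, obviously linear), applies Lemma~\ref{lem-wfw} with the injection $h(w)=w^R$, and uses this single language as the witness simultaneously for $\mathcal{PRA}\subsetneq\mathcal{ERA}$, for the properness of $\mathcal{LRA}\subset\mathcal{CS}$, and for $\mathcal{LIN}\not\subseteq\mathcal{LRA}$.

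\textbf{The strict inclusion $\mathcal{PRA}\subset\mathcal{ERA}$.} You never address it: in part~(1) you pivot from an aborted sentence about $\mathcal{LRA}\subset\mathcal{PRA}$ to discussing $k$-$\mathcal{RA}\subset\mathcal{LRA}$ and then an unneeded $\mathcal{ERA}\subsetneq\mathcal{RA}$ (the theorem only claims $\subseteq$ there). The paper obtains $\mathcal{PRA}\subset\mathcal{ERA}$ directly from Lemma~\ref{lem-wfw}: $L_3\notin\mathcal{PRA}$, while $L_3\in\mathcal{CS}\subseteq\mathcal{ERA}$ by part~(2).

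\textbf{On $\mathcal{CS}\subseteq\mathcal{ERA}$.} Your direct LBA-to-RA sketch is plausible but vague. The paper instead reuses the existing construction: an LBA is simulated by a two-stack machine whose stacks stay linearly bounded, and the RA of Theorem~\ref{teiri1} then has weight bounded by $stm$ of a linear-length string, i.e.\ exponential in $|w|$. This avoids building any new simulation.
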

\begin{proof}
(1). From the definitions, the inclusion $\mathcal{REG}\subseteq 1$-$\mathcal{RA}$ is straightforward. Conversely, for a given $k$-RA $\mathcal{A}=(S, \Sigma, A, D_0, f)$ and for  $w \in L(\mathcal{A})$, there exists a $\pi$ in $IP(\mathcal{A},w)$ such that for each $D_i$ appearing in $\pi$, we have $|D_i| \le k$. Let $Q=\{ D \in S^{\#} \mid |D|\leq k\}$ and $F=\{ D \mid D\in Q, f\subseteq D, Res_A (D) = \{ D \} \}$, and  
 construct an NFA $M=(Q, \Sigma, \delta, D_0, F)$, where $\delta$ is defined by $\delta(D,a)\ni D'$ if $D\rightarrow^a D'$ for $a \in 
\Sigma\cup \{\lambda\}$. Then, it is seen that $L(\mathcal{A})=L(M)$, and $k$-$\mathcal{RA}\subseteq \mathcal{REG}$, thus we obtain that $\mathcal{REG}= k$-$\mathcal{RA}$.  The other inclusions are all obvious 
from the definitions. The language $L=\{a^nb^n\mid n\geq 0\}$ proves 
the proper inclusion : $\mathcal{REG}\subset \mathcal{LRA}$. A proper inclusion $\mathcal{PRA} \subset \mathcal{ERA}$ is due to that $L_3 =\{ ww^R \mid w\in \{a,b\}^* \} \in \mathcal{ERA} - \mathcal{PRA}$,  
which follows from Lemma \ref{lem-wfw}. \\
(2). Given an $lin$-RA $\mathcal{A}$, one can consider a linearly bounded automaton (LBA) $M$ that simulates an interactive process $\pi$ in $IP(\mathcal{A},w)$ for each $w$, because of the linear boundedness of $\mathcal{A}$. This implies that $\mathcal{LRA} \subseteq \mathcal{CS}$. 
A proper inclusion is due to that $L_3 =\{ ww^R \mid w\in \{a,b\}^* \} \in \mathcal{LIN}-\mathcal{LRA}$,  
which follows from Lemma \ref{lem-wfw}.

Further, for a given LBA $M$, one can find an equivalent two-stack machine $M_s$ whose stack lengths are linearly bounded by the input length. This implies, from the proof of Theorem \ref{teiri1},  that $M_s$ is simulated by an RA $\mathcal{A}$ that is exponentially bounded. Thus, it holds that $\mathcal{CS} \subseteq \mathcal{ERA}$. \\
(3). The language $L_1=\{a^nb^nc^n\mid n\geq 0\}$ (resp. 
$L_2=\{a^m b^m c^n d^n \mid m,n \geq 0\}$)  is in 
$\mathcal{LRA} - \mathcal{CF}$ (resp. \ $\mathcal{LRA} -  \mathcal{LIN})$, while, again from Lemma \ref{lem-wfw},  
the language $L_3$ is in 
$\mathcal{LIN}-\mathcal{LRA}$. This completes the proof. 
\end{proof}

\begin{figure}[t]
\centerline{
\includegraphics[scale=0.65]{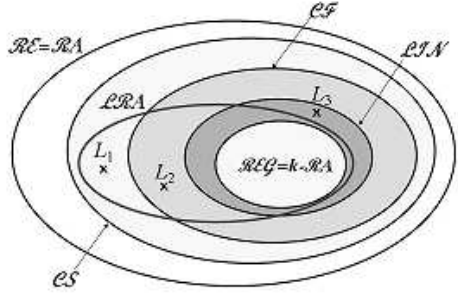}}
\caption{Language class relations in the Chomsky hierarchy : $L_1=\{a^nb^nc^n \mid n \geq 0\}$; \ $L_2=\{a^m b^m c^n d^n \mid m, n \geq 0\}$;\ 
$L_3=\{w w^R \mid w \in \{a,b\}^* \}$ }
\label{hie}
\end{figure}

\section{Concluding Remarks}

Based on the formal framework presented in a series of papers \cite{ER:07a,ER:07b,ER:09,EMR:10,EMR:11}, we have introduced the notion of reaction automata and investigated the language accepting powers of the automata. Roughly, a reaction automaton may  be characterized in terms of  three key words as follows : a {\it language accepting device} based on the {\it multiset rewriting} in the {\it maximally parallel manner}. Specifically, we have shown that in  a  computing schema with one-pot solution and a finite number of molecular species, reaction automata can perform the Turing universal computation. 
The idea behind their computing principle is to simulate the behavior of two pushdown stacks in terms of  multiset rewriting with the help of an encoding technique, where both the manner of maximally parallel rewriting and the role of the inhibitors in each reaction are effectively utilized.   

There already exist quite a few number of literature investigating on the notion of a multiset and its related topics (\cite{CPRS:01}) in which 
multiset automata and grammars are formulated and explored largely from the formal language theoretic point of view. Rather recent papers (\cite{KTZ:09a,KTZ:09b}) focus on the accepting power of multiset pushdown automata to characterize the classes of multiset languages through  investigating  their closure properties.    

To the authors' knowledge, however, relatively few works have devoted to computing languages with multiset rewriting/communicating mechanism. Among them, one can find some papers published in the area of membrane computing (or spiking neural P-systems)  where a string is encoded in some manner as a natural number and a language is specified  as a set of natural numbers (e.g., \cite{CIPP:06}). Further, recent developments concerning P-automata and its variant called dP-automata are noteworthy in the sense that they may give rise to a new type of computing devices that could be a bridge between P-system theory and the theory of reaction systems and automata (\cite{COV:10,IPPY:11,PP:11}). 

In fact, a certain number of computing devices similar to reaction automata have already been investigated in the literature.  Among others, parallel labelled rewrite transition systems are proposed and investigated (\cite{HM:01}) in which 
multiset automata may be regarded as special type of reaction automata, whereas neither regulation by inhibitors nor  maximally parallel manner of applying rules is employed in their rewriting process.  A quite recent   
article \cite{AV:11} investigates the power of maximally parallel multiset rewriting systems (MPMRSs) and proves the existence of a universal MPMRS having smaller number of rules, which directly implies the existence of a universal antiport P-systems, with one membrane, having smaller number of rules. In contrast to reaction automata, a universal MPMRS computes any partially recursive function provided that the input is the encoding of a register machine computing a target function. 

Turning to the formal grammars, one can find  random context grammars (\cite{DP:01}) and their variants (such as semi-conditional grammars in \cite{Paun:85}) that employ regulated rewriting mechanisms called permitting symbols and forbidding symbols. The roles of these two are corresponding to reactants and inhibitors in reactions, whereas they deal with sets of strings (i.e., languages in the usual sense) rather than multisets. 
We finally refer to an article on stochastic computing models based on chemical kinetics,  which  proves that  well-mixed 
finite stochastic chemical reaction networks with a fixed number of species can achieve Turing universal computability with an arbitrarily low error probability (\cite{SCWB:08}). In this paper, we have shown that non-stochastic chemical  reaction systems with finite number of molecular species  can also achieve Turing universality with the help of 
 inhibition mechanism.

Many subjects remain to be investigated along the research direction suggested by reaction automata in this paper. First, it is of importance to completely characterize the computing powers and the closure properties of complexity subclasses of reaction automata introduced in this paper.   
Secondly, from the viewpoint of designing chemical reactions, it is useful to explore a methodology for ``chemical reaction programming'' in terms of  reaction automata. It is also interesting to simulate a variety of chemical reactions in the real world by the use of the framework of reaction automata. 

\section*{Acknowledgements}

The authors gratefully acknowledge useful remarks and comments by anonymous referees which improved an earlier version of this paper.
The work of F. Okubo was possible due to Waseda University Grant for Special Research Projects: 2011A-842.
The work of S. Kobayashi was in part supported by Grants-in-Aid for Scientific Research (C) No.22500010, Japan Society for the Promotion of Science.
The work of T. Yokomori was in part supported by Waseda University Grant for Special Research Projects: 2011B-056.

\end{document}